\documentclass[11pt]{article}

\usepackage{amssymb}
\usepackage{amsmath}
\usepackage{amsfonts}
\usepackage{amsthm}
\usepackage{boxedminipage}
\usepackage[margin=1.5in]{geometry}
\usepackage[colorinlistoftodos]{todonotes}
\usepackage{comment}

\usepackage{tikz} 
\usetikzlibrary{matrix}

\usepackage{color,graphicx}
\parindent = 0 pt
\parskip = 5 pt 
\addtolength{\textwidth}{1in}
\addtolength{\oddsidemargin}{-0.5in}
\addtolength{\textheight}{1.6in}
\addtolength{\topmargin}{-0.8in}


\title{Partially Thermostated Kac Model}
\author{\vspace{5pt} Hagop Tossounian$^1$, Ranjini Vaidyanathan$^1$ \\ \vspace{5pt} \small{$^1$School of Mathematics, Georgia Institute of Technology}\\[-6pt]
\small{$686$ Cherry Street Atlanta, GA $30332-0160$ USA}}

\begin{document}

\pagestyle{plain}
\graphicspath{{Graphics/}}

\newtheorem{thm}{Theorem}[section]
\newtheorem{cor}[thm]{Corollary}
\newtheorem{lem}[thm]{Lemma}
\newtheorem{prop}[thm]{Proposition}

\theoremstyle{definition}
\newtheorem{defin}[thm]{Definition}
\newtheorem{rem}[thm]{Remark}
\newtheorem*{xrems}{Remarks}
\newtheorem{exa}[thm]{Example}

\newtheorem*{xrem}{Remark}

\newcommand{\V}{\ensuremath{\mathbf{v}}}
\newcommand{\el}{\ensuremath{\mathcal{L}}}
\newcommand{\RR}{\mathbb{R}}
\def\<#1>{\langle#1\rangle}

\def\Xint#1{\mathchoice
{\XXint\displaystyle\textstyle{#1}}%
{\XXint\textstyle\scriptstyle{#1}}%
{\XXint\scriptstyle\scriptscriptstyle{#1}}%
{\XXint\scriptscriptstyle\scriptscriptstyle{#1}}%
\!\int}
\def\XXint#1#2#3{{\setbox0=\hbox{$#1{#2#3}{\int}$ }
\vcenter{\hbox{$#2#3$ }}\kern-.6\wd0}}
\def\ddashint{\Xint=}
\def\dashint{\Xint-}

\maketitle

{\let\thefootnote\relax\footnote{Work partially supported by U.S. National Science Foundation grant DMS 1301555 \\ \copyright\, 2015 by the authors. This paper may be reproduced, in its entirety, for non-commercial purposes.}}

\begin{abstract}
We study a system of $N$ particles interacting through the Kac collision, with $m$ of them interacting, in addition, with a Maxwellian thermostat at temperature $\frac{1}{\beta}$. We use two indicators to understand the approach to the equilibrium Gaussian state. We prove that i) the spectral gap of the evolution operator behaves as $\frac{m}{N}$ for large $N$ ii) the relative entropy approaches its equilibrium value (at least) at an eventually exponential rate $\sim \frac{m}{N^2}$ for large $N$. The question of having non-zero entropy production at time $0$ remains open. A relationship between the Maxwellian thermostat and the thermostat used in \cite{BLV} is established through a van Hove limit.
\end{abstract}

\section{Introduction}

Mark Kac introduced a stochastic model of $N$ identical particles interacting through binary collisions \cite{Kac}. The particles are constrained in $1$ dimension and are uniformly distributed in space. Hence, the phase space consists of 1-dimensional velocities $\V=(v_1,...,v_N)$ that evolve when the particles undergo random collisions as follows: Two particles $i,j$ are chosen uniformly among the $\binom{N}{2}$ pairs, and $\theta \in [0,2\pi)$ is chosen uniformly. The outgoing velocities $v_i^\ast$ and $v_j^\ast$ are given by $v_i \cos{\theta}+v_j \sin{\theta}$ and $-v_i \sin{\theta}+v_j \cos{\theta}$ respectively, where $v_i$ and $v_j$ are the incoming velocities of particles $i$ and $j$.

This collision preserves the kinetic energy and hence $\V$ lies on the constant energy sphere $S^{N-1}(\sqrt{NE})$, where $E$, the energy per particle, is determined by the initial condition. The system is modeled as a Markov jump process with collision times that are exponentially distributed with mean $\frac{1}{N\lambda}$. A probability density $f(\V,t)$ on the phase space evolves through the corresponding Kolmogorov forward equation, called the Kac master equation:

\begin{equation} \label{E:Kmast}
\frac{\partial f}{\partial t} = N\lambda (Q-I)f 
\end{equation}

where $Q=\frac{1}{\binom{N}{2}} \displaystyle\sum_{i<j} Q_{ij}$ is the Kac operator with 

\[Q_{ij}f=\dashint_0^{2\pi} f(...,v_i \cos{\theta}+v_j \sin{\theta},...,-v_i \sin{\theta}+v_j \cos{\theta},...) \,d\theta\]

The unique equilibrium state is the uniform distribution on the sphere.

In his paper, Kac precisely formulates Boltzmann's Stosszahlansatz (molecular chaos hypothesis), which says that for a dilute gas in the large particle limit, the incoming velocities of colliding particles are uncorrelated. Kac proved for his model that this property propagates in time. This notion, now known as ``propagation of chaos'', enabled him to rigorously derive a space{-}homogeneous Boltzmann equation for his model \cite{Kac} (see also \cite{McK}). In fact, one of Kac's motivations was to study approach to equilibrium for the Boltzmann equation using the \emph{linear} $N$ particle master equation \eqref{E:Kmast}. In particular, Kac conjectured that the spectral gap of $N\lambda (I-Q)$ is bounded below, away from 0, uniformly in $N$. This was proved by Janvresse \cite{Janvresse} and the exact gap was found by Carlen, Carvalho, and Loss in \cite{CCL} and independently by Maslen \cite{Maslen}. It follows from their work that $\vert\vert f(\vec{v},t) - 1 \vert\vert_2 \leq e^{-\frac{\lambda}{2}t} \vert\vert f(\vec{v},0) - 1 \vert\vert_2 $, where $f(\vec{v},t)$ satisfies \eqref{E:Kmast}, and the norm is in $L^2(S^{N-1}(\sqrt{NE}),d\sigma)$, where $d\sigma$ is the normalized uniform measure on the sphere.


It turns out that the relative entropy $S(f\vert 1) = \int f \log f d\sigma$, being an extensive quantity, is a more favorable measure of distance to equilibrium in the large particle limit. For the Kac model, entropic approach to the equilibrium at an exponential rate of order $\frac{1}{N}$ is shown by Villani in \cite{Villani}. This rate was shown to be essentially optimal near $t=0$ by Einav in \cite{amit}, by  constructing states in which a macroscopic fraction of the kinetic energy was contained in a fraction $\sim N^{-\alpha}$ of the particles, for $\alpha>0$ suitably chosen.

The Kac model coupled to a heat bath was studied in \cite{BLV}, where they explored the possibility of obtaining better entropic convergence by remaining close to physically realizable initial states. To this end,  they considered a system of $N$ particles where, in addition to the Kac collision among them, each collides with a reservoir modeled as an infinite gas in thermal equilibrium. This resulted in a system in which all except a relatively small number of particles are in equilibrium. Exponential convergence to the canonical equilibrium at a rate of $\frac{\mu}{2}$ was proved, where $\mu$ is the strength of the thermostat. Note that since the energy of the $N$ particles is not conserved in the presence of a heat bath, the phase space becomes $\RR^N$.

In this paper, we take the model in $\cite{BLV}$ but let only $m<N$ of the particles 
be thermostated, and use a simpler model for the thermostat: the Maxwellian thermostat given by \eqref{E:strongt}. (We will refer to the Maxwellian thermostat as the strong thermostat, and to the thermostat used in \cite{BLV} (see \eqref{E:weak}) as the weak thermostat.) The motivation for our study is two-fold. First, studying partially thermostated systems is a step towards introducing spatial inhomogeneity in Kac-type models, by viewing the $m$ thermostated particles as situated ``closer'' to the heat bath. These $m$ particles act as the medium of heat exchange between the other particles and the reservoir. Second, the convergence to equilibrium in \cite{BLV} persisted even without the interparticle interaction, which did not play a role in the slowest decay modes. By thermostating only a subset of the particles, the interparticle interaction become necessary for the system to approach the canonical equilibrium and hence their role can be better understood. Using the spectral gap, we show that (strongly) thermostating a macroscopic fraction of particles i.e. $m = \alpha N$ guarantees approach to equilibrium in the $L^2$ distance uniformly in $N$. We also obtain a weaker convergence result in terms of the relative entropy of the system.

\section*{Description of Model and Results}

We have $N$ particles interacting via the Kac collision, with $m$ among them interacting, in addition, with a Maxwellian thermostat at inverse temperature $\beta$. We fix $N \geq 2$, $1 \leq m < N$ (The case $m=N$ has been studied in \cite{BLV}, using the weak thermostat.) When particle $k \in \{1,...,m\}$ is thermostated, it forgets its precollisional velocity $v_k$ and is given a new velocity from the Gaussian distribution at the temperature of the heat bath. Physically, this could model the behavior of a particle colliding a large number of times with particles from the heat bath. This can also be thought of as a particle in the system being replaced with one from the heat bath. 

The collision times with the heat bath for particles $\{1, \dots, m\}$ are independent and exponentially distributed with parameter $\mu$. The master equation for the evolution of a phase space probability density $f(\V,t)$ is given by

\begin{equation} \label{E:mast}
\frac{\partial f}{\partial t}= N \lambda (Q-I)f + \mu \sum_{k=1}^m(R_k - I)f \,\,\,\, ,
\end{equation}

where the operator

\begin{equation}\label{E:strongt}
R_k f := \sqrt{\frac{\beta}{2\pi}} e^{-\beta \frac{v_k^2}{2}} \int{dw f(v_1,v_2,...,v_{k{-}1},w,v_{k+1},\dots,v_N)}
\end{equation}

corresponds to the thermostat acting on the $k$-th particle. Recall that phase space is $\mathbb{R}^N$ since our system is non-isolated and energy is not conserved.  
We assume that the particles $1,2, \dots,m $ and the particles $m+1,...,N$ are indistinguishable, i.e. $f(\V,t)$ is symmetric under exchange of variables $v_1, \dots, v_m$ and under the exchange of variables $v_{m+1},..., v_N$. The evolution preserves this symmetry. The interplay between the thermostat-interaction and the Kac collisions that distribute the energy to the non-thermostated particles lead the system to equilibrium. 

As we will see, the unique equilibrium of eq.~\eqref{E:mast} is the Gaussian

\[\gamma(\V):= \prod_{k=1}^{N} g(v_k) :=\prod_{k=1}^{N} \sqrt{\frac{\beta}{2\pi}} e^{-\beta \frac{v_k^2}{2}} \,\, . \]
The evolution operator in eq.~\eqref{E:mast} is not self-adjoint on $L^2(\mathbb{R}^N)$, and to this end we make the ground-state transformation

\begin{equation}
f(\V)=\gamma(\V) \bigl( 1+h(\V)\bigr) \,\,\, , \label{E:gs}
\end{equation}

where $\int h \gamma =0$. The evolution equation for $h(\V,t)$ becomes

\begin{equation} \label{E:hmast}
\frac{\partial h}{\partial t}= N \lambda (Q-I)h + \mu \sum_{k=1}^m(P_k - I)h \,\,\,\, ,
\end{equation}

where 

\[ P_k h:= \int{dw g(w) h(v_1,...,v_{k-1},w,v_{k+1},\dots,v_N)} \]

is a function independent of $v_k$. In the Hilbert space $L^2(\mathbb{R}^N,\gamma)$, the operators $P_k$, $Q$ and hence $\el_{N,m}:= N \lambda (I-Q) + \mu \sum_{k=1}^m(I-P_k)$ associated with the evolution, are self-adjoint. In fact, each $P_k$ is a projection.

The rate at which $h$ tends to its equilibrium value $0$ in the space $L^2(\mathbb{R}^N,\gamma)$ is given by the spectral gap $\Delta_{N,m}$ (see \eqref{E:gapdef}). Theorem~\ref{T:gap} states that $\Delta_{N,m} \sim \frac{m}{N}$ for large $N$. It turns out that the kinetic energy $K(t):=\int (\sum_{j=1}^N v_j^2) f(\V,t) d\V$ also behaves similarly for large $N$. More precisely, $K(t)$, which is not conserved since the $N$-particle system is not isolated, tends to its equilibrium value $\frac{N}{2\beta}$ at a rate $\sim \frac{m}{N}$ when $N$ is large. 

\begin{xrem}
The behavior of the kinetic energy is indicative of the action of the operator $\el_{N,m}$ on polynomials of the form $v_j^2$. Moreover, for $N=2, m=1$, we show in Appendix \ref{A:sp2} that the gap eigenfunction - the slowest rate of decay in the space $L^2(\mathbb{R}^2,\gamma)$ - is a second degree polynomial. One may thus wonder if the gap eigenfunction is a second degree polynomial for other values of $N$ too. However, currently we only have asymptotic bounds on $\Delta_{N,m}$.\end{xrem}

Next, we study the behavior of the relative entropy $S(f|\gamma)$ (defined in \eqref{E:relent}). To obtain a quantitative rate for the decay in the relative entropy (we use the opposite sign for the relative entropy), one could try to prove Cercignani's conjecture \cite{Cerc} applied to our system:

\begin{equation} \label{E:prod}
-\frac{dS(f(.,t)|\gamma)}{dt} \geq k S(f(.,0)|\gamma) 
\end{equation}

for some $k>0$, which would yield an exponential bound \[S(f(.,t)|\gamma) \leq e^{-kt} S(f(.,0)|\gamma)\] for the entropy. The quantity $-\frac{dS}{dt}$ is called the entropy production. Parenthetically, note that the spectral gap imposes a condition on how big $k$ can be: linearizing~\eqref{E:prod} and comparing lowest order terms gives 

\begin{equation} \label{E:entbd}
k \leq 2\Delta_{N,m}.\end{equation}

For our problem, finding a bound for the entropy production appears to be hard because the familiar methods to obtain such estimates fail (we demonstrate why at the end of Section \ref{S:entr}). Instead, we show in Theorem \ref{T:duh} that the entropy at time $t$ satisfies

\begin{equation} \label{E:entrN}
S(f(.,t)|\gamma) \leq D_{N,m}(t) S(f(.,0)|\gamma)
\end{equation}

where $f(.,t)$ is the solution of \eqref{E:mast} with initial condition $f(.,0)$ and 
\[D_{N,m}(t)=\left(-\frac{\delta_- e^{-\delta_+t}}{\delta_{+}-\delta_{-}} + \frac{\delta_+ e^{-\delta_-t}}{\delta_{+}-\delta_{-}} \right).\]

For large $N$ and $t$, $D_{N,m}(t)\sim \exp(-\frac{m\lambda \mu}{(N-1)(N\lambda+\mu)}t)$. Note that \eqref{E:entrN} is weaker than \eqref{E:prod}. For instance, \eqref{E:entrN} does not yield an entropy production bound at time $0$, since $D_{N,m}'(0)=0$. We prove the Theorem by employing the convexity of $S(f|\gamma)$ directly. The idea is similar to a method used in \cite{BLV} to study the entropy of a particle acted on by the weak thermostat.

The generator $U$ of the weak thermostat is defined as follows:

\begin{equation} \label{E:weak}
U [f(v)]:= \int dw\dashint_0^{2\pi} d\theta f(v \cos{\theta}+w\sin{\theta}) g(-v\sin \theta +w \cos \theta) \end{equation}

where $g (v)=\sqrt{\frac{\beta}{2\pi}} e^{-\beta v^2/2}$. 

The following entropy decay bound for the process is shown in \cite{BLV}:

 \begin{equation} \label{E:entrweak}
 S(e^{\eta (U-I)t}f|g) \leq e^{-\eta t/2} S(f|g) \, , \mbox{ or }
 \end{equation}

\begin{equation} \label{E:kappa}
\frac{dS}{dt} \leq -\frac{\eta}{2}S. 
\end{equation}

As an aside, we show in Appendix~\ref{A:appopt} that the bound in \eqref{E:entrweak} is optimal by using an optimizing sequence similar to that used in \cite{bobcerc,CCLRV,amit}. 

One can interpret the weak thermostat as a particle interacting with an infinite heat bath at temperature $\frac{1}{\beta}$ via the Kac-collision. The velocity distribution $g(v)$ of the particles in the heat bath is not affected by the collisions by virtue of the infinite size of the bath. 
This picture shows why it is weaker than the strong thermostat: In order for a particle from the system to forget its incoming velocity and pick a new one from the distribution $g(v)$, it has to undergo a large number of weak-thermostat interactions. The strong thermostat achieves this in one step \eqref{E:strongt}. Although the weak thermostat mimics heat bath interactions more naturally, the strong thermostat is advantageous to use as a first step since the corresponding operator is idempotent and thus mathematically simpler. Moreover, we demonstrate in Theorem \ref{T:vh} that the weak thermostat can be obtained from the strong thermostat via a van Hove limit.

The paper is organized as follows: We show approach to equilibrium in $L^2$ in Section~\ref{S:L2}, compute the van Hove limit in Section \ref{S:vh}, and show approach to equilibrium in relative entropy in  Section~\ref{S:entr}.

\section{Approach to equilibrium in $L^2$} \label{S:L2}

From this point on, we set $\beta=1$ without loss of generality.

In concurrence with the ground-state transformation \eqref{E:gs}, let $\mathcal{X}_N:=\{ u \in L^2(\mathbb{R}^N,\gamma) :  \<u,1> =0 \}$, where $\<.,.>$ denotes the inner product in the $L^2$ space with weight $\gamma$. The condition $\<u,1>=\int{u(\V) \gamma(\V) d\V}=0$ corresponds to the normalization of the probability density $f$. 

\begin{lem} \label{L:ker}\
\begin{itemize} 
\item $\el_{N,m} \geq 0$ on $\mathcal{X}_N$.
\item $\el_{N,m} h=0 \Leftrightarrow h=0$.
\end{itemize}
\end{lem}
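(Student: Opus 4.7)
The plan is to treat $\el_{N,m}$ as a non-negative linear combination of orthogonal projections. Using $Q = \binom{N}{2}^{-1}\sum_{i<j} Q_{ij}$ I rewrite
\[
\el_{N,m} \;=\; \frac{2\lambda}{N-1}\sum_{i<j}(I-Q_{ij}) \;+\; \mu\sum_{k=1}^m(I-P_k).
\]
The text already asserts self-adjointness of $P_k$ and $Q$ on $L^2(\RR^N,\gamma)$; the same rotation-invariance argument (the two-dimensional Gaussian $g(v_i)g(v_j)$ is invariant under rotations in the $(v_i,v_j)$-plane, together with a $\theta \to -\theta$ change of variables) shows each individual $Q_{ij}$ is self-adjoint. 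Moreover $Q_{ij}$ is idempotent: $Q_{ij}f$ is already invariant under rotations in the $(v_i,v_j)$-plane, so a second average over $\theta$ changes nothing. So $Q_{ij}$ and $P_k$ are orthogonal projections, hence $I-Q_{ij}\geq 0$ and $I-P_k\geq 0$, and summing yields $\el_{N,m}\geq 0$ on all of $L^2(\RR^N,\gamma)$, which proves the first bullet.

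For the second bullet, the reverse direction is immediate. For the forward direction, suppose $\el_{N,m} h = 0$ with $h \in \mathcal{X}_N$. Pairing with $h$ and using that every summand in the display above is a non-negative quadratic form forces
\[
\<(I-Q_{ij})h,h> = 0 \;\;\text{for all } i<j, \qquad \<(I-P_k)h,h> = 0 \;\;\text{for } k = 1,\ldots,m.
\]
Since $I-Q_{ij}$ and $I-P_k$ are orthogonal projections, these reduce to the pointwise identities $Q_{ij}h = h$ and $P_k h = h$. The first family says $h$ is invariant under every planar rotation $R_{ij}(\theta)$; since such rotations generate $SO(N)$, $h$ must depend only on $|\V|$. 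The second family says $h$ is independent of $v_1,\ldots,v_m$. Using $m \geq 1$, a radial function independent of $v_1$ (evaluate along the $v_1$-axis with the remaining coordinates zero) must be constant, and the normalization $\<h,1>=0$ then forces $h \equiv 0$.

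The only mildly delicate step is deducing radiality of $h$ from invariance under each individual $Q_{ij}$, which uses the standard fact that planar rotations generate $SO(N)$; everything else is algebra with commuting non-negative self-adjoint operators. It is worth noting that the hypothesis $m\geq 1$ is essential here: in pure Kac dynamics $|\V|^2$ is conserved, so every radial function would sit in the kernel of $N\lambda(I-Q)$, and only the thermostat terms rule this out.
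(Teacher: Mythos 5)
Your proof is correct and follows the same strategy as the paper: positivity as a sum of (scaled) orthogonal projections $I-Q_{ij}$ and $I-P_k$, and triviality of the kernel because a function that is simultaneously radial and independent of $v_1$ must be constant, hence $0$ in $\mathcal{X}_N$. The only difference is that you prove from scratch (via the decomposition of $Q$ into the individual planar-rotation averages $Q_{ij}$) the facts $(I-Q)\geq 0$ and $\ker(I-Q)=\{\text{radial functions}\}$, which the paper simply cites from Kac and \cite{BLV}.
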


\begin{proof}
We know from \cite{Kac,BLV} that $(I-Q) \geq 0$ and $(I-Q)h=0 \Leftrightarrow h$ is radial. Each $(I-P_k)$ is a projection with kernel precisely the subspace of functions in $\mathcal{X}_N$ that are independent of $v_k$. The only function in $\mathcal{X}_N$ that belongs to the kernel of $\sum_{k=1}^m (I-P_k)$ and is also radial is $0$. Hence, the Lemma is proved.
\end{proof}

The spectral gap of the operator $\el_{N,m}$ is defined as:

\begin{equation} \label{E:gapdef}
\Delta_{N,m} := \inf \{\<h,\el_{N,m}[h]>: ||h||=1, h \in \mathcal{X}_N\} \,\, .
\end{equation}

Lemma \ref{L:ker} implies that initial states in $\mathcal{X}_N$ decay to equilibrium at an exponential rate $\Delta_{N,m}$. 

\begin{xrem}
Gaussian states of temperature greater than twice the temperature of the heat bath cannot be represented by a function $h \in \mathcal{X}_N$.
\end{xrem}

The observation that $\el_{2,1}$ is simply a linear combination of two projections ($Q \equiv Q_{12}$ is an orthogonal projection onto radial functions in $\mathbb{R}^2$) lets us compute the whole spectrum in this case. This is done in Appendix \ref{A:sp2}. We see that the spectral gap is the lower root of the quadratic $x^2-(2\lambda+\mu)x+\lambda \mu$:

\begin{equation} \label{E:gap2}
\Delta_{2,1}:=\frac{(2\lambda+\mu)-\sqrt{4\lambda^2+\mu^2}}{2}
\end{equation}
with gap eigenfunction \[\frac{2\lambda}{2\lambda+\mu-\Delta_{2,1}} H_2(v_1) + \frac{2\lambda}{2\lambda - \Delta_{2,1}} H_2(v_2) , \]

where $H_2$ is the monic Hermite polynomial (with weight $\gamma$) of degree $2$. 

For general $N,m$, we have the following theorem:

\begin{thm} \label{T:gap}
Assume $\lambda, \mu >0$. Then 

\begin{equation}\label{E:est1}\frac{m}{N-1}\Delta_{2,1} \leq \Delta_{N,m} \leq \frac{m}{N-1} \frac{2 \lambda \mu  }{\mu+\lambda}\,\, .
\end{equation}
\end{thm}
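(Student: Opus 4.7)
The two inequalities are handled by independent arguments: the upper bound by a variational trial function, the lower bound by comparison to the $(N{=}2, m{=}1)$ case.

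\emph{Upper bound.} Motivated by the explicit $N{=}2,m{=}1$ gap eigenfunction, I would take the two-dimensional trial subspace
\[
V := \mathrm{span}\{h_{\mathrm{th}}, h_{\mathrm{fr}}\}, \qquad h_{\mathrm{th}} := \sum_{k=1}^{m} H_2(v_k), \quad h_{\mathrm{fr}} := \sum_{l=m+1}^{N} H_2(v_l),
\]
where $H_2(v)=v^2-1$ is the monic degree-$2$ Hermite polynomial for the weight $g$. The two elementary identities $P_i H_2(v_j) = (1-\delta_{ij})H_2(v_j)$ and $(I-Q)H_2(v_l) = \tfrac{1}{N-1}\bigl[H_2(v_l) - \tfrac{1}{N}\sum_{j=1}^N H_2(v_j)\bigr]$ (the latter follows from $Q_{ij}H_2(v_i) = \tfrac12(H_2(v_i)+H_2(v_j))$) show that $V$ is invariant under $\el_{N,m}$, with $2\times 2$ matrix of trace $\mu + \tfrac{N\lambda}{N-1}$ and determinant $\tfrac{m\lambda\mu}{N-1}$. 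The smaller root $\Lambda_-$ is an upper bound for $\Delta_{N,m}$, and since $\Lambda_-\Lambda_+ = \det$ with $\Lambda_+ \geq \tfrac12 \mathrm{tr}$, Vieta's formulas yield
\[
\Lambda_- \;\leq\; \frac{2\det}{\mathrm{tr}} \;=\; \frac{2m\lambda\mu}{(N-1)\mu + N\lambda} \;\leq\; \frac{m}{N-1}\cdot\frac{2\lambda\mu}{\lambda+\mu},
\]
the last step being the trivial inequality $(N-1)\mu + N\lambda \geq (N-1)(\lambda+\mu)$.

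\emph{Lower bound.} The strategy is to estimate $\langle h, \el_{N,m}h\rangle$ from below by a sum of two-particle operators of the form $\el^{(k,l)} := 2\lambda(I-Q_{kl}) + \mu(I-P_k)$, one for each pair $k\in\{1,\dots,m\}$, $l\neq k$. Each such $\el^{(k,l)}$ acts only on $(v_k,v_l)$ and has spectral gap $\Delta_{2,1}$ on the mean-zero subspace of $L^2(\mathbb{R}^2,\gamma_2)$. Integrating the two-particle variational bound against the remaining $N-2$ variables yields
\[
\langle h, \el^{(k,l)} h\rangle \;\geq\; \Delta_{2,1}\,\|(I-E_{kl})h\|^2, \qquad E_{kl}h := \int h(\V)\,g(v_k)g(v_l)\,dv_k\,dv_l.
\]
Writing $\mu(I-P_k) = \tfrac{\mu}{N-1}\sum_{l\neq k}(I-P_k)$ and $N\lambda(I-Q) = \tfrac{2\lambda}{N-1}\sum_{i<j}(I-Q_{ij})$, I would regroup $(N-1)\langle h,\el_{N,m}h\rangle$ as a sum over ordered pairs $(k,l)$ with $k\in[m],\,l\neq k$, discarding the (missing) contribution from unordered pairs $\{i,j\}\subset\{m+1,\dots,N\}$ via the positivity of $(I-Q_{ij})$, and absorbing the over-counted pairs $\{i,j\}\subset[m]$. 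Termwise application of the two-particle bound, combined with the identity $\sum_{k\in[m], l\neq k}\|(I-E_{kl})h\|^2 = m(N-1)\|h\|^2 - \sum_{k,l}\|E_{kl}h\|^2$ together with $\langle h,1\rangle = 0$, should then produce the desired factor $\tfrac{m}{N-1}\Delta_{2,1}$.

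\emph{Main obstacle.} The principal technical hurdle is the combinatorial mismatch in the lower bound: the index set of two-particle operators anchored at a thermostat, $\{(k,l): k\in[m],\, l\neq k\}$, is neither disjoint nor complete with respect to the unordered pairs appearing in Kac's $(I-Q)$. Pairs entirely outside $[m]$ contribute to $\el_{N,m}$ but cannot be paired with any thermostat, and must be dropped by positivity; pairs entirely inside $[m]$ are conversely over-represented. Carrying out this balancing cleanly so that the coefficient $\tfrac{m}{N-1}$ (and no worse) emerges is where the real work lies, and I expect it to require exploiting the $S_m\times S_{N-m}$ symmetry of $\el_{N,m}$ to reduce to test functions that are symmetric in each block.
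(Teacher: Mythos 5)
Your upper bound is correct and is essentially the paper's own argument: the same two-dimensional invariant subspace spanned by $\sum_{k\le m}H_2(v_k)$ and $\sum_{l>m}H_2(v_l)$, and the same $2\times 2$ matrix with trace $\mu+\tfrac{N\lambda}{N-1}$ and determinant $\tfrac{m\lambda\mu}{N-1}$. Your passage from the smaller root to the stated bound via $\Lambda_-\le 2\det/\mathrm{tr}$ is in fact slightly cleaner than the paper's expansion of $1-\sqrt{1-x}$ and removes its unnecessary caveat ``for $N$ large enough.''

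The lower bound is where the real content lies, and your sketch has two genuine gaps that you flag but do not close. First, the operator comparison $(N-1)\el_{N,m}\ \ge\ \sum_{k\le m,\,l\ne k}\bigl[2\lambda(I-Q_{kl})+\mu(I-P_k)\bigr]$ is simply false for $m\ge 2$: summing over ordered pairs counts each Kac pair inside $\{1,\dots,m\}$ twice, so the difference contains $-2\lambda\sum_{i<j\le m}(I-Q_{ij})$, which has the wrong sign; halving the Kac coefficient on those pairs replaces $\Delta_{2,1}$ by the strictly smaller gap of $\lambda(I-Q)+\mu(I-P_1)$. (A workable repair is to attach \emph{both} thermostats to an intra-block pair, i.e.\ use $2\lambda(I-Q_{kl})+\mu(I-P_k)+\mu(I-P_l)$, whose gap is $\ge\Delta_{2,1}$ by positivity of the added term, and redo the pair count.) Second, even where the decomposition is clean, the final step needs $\sup\{\sum_{k\le m,\,l\ne k}\|E_{kl}h\|^2:\ h\in\mathcal{X}_N,\ \|h\|=1\}\le m(N-2)$; this does not follow from $\langle h,1\rangle=0$ and the identity you quote, since each $E_{kl}=P_kP_l$ is a projection and the trivial bound is $m(N-1)$. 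The needed improvement requires diagonalizing the commuting family $\{P_j\}$ in the tensor Hermite basis (eigenvalue $=\#\{$used pairs with both indices unoccupied$\}$, which is at most $m(N-2)$ for nonconstant basis elements). The paper sidesteps both issues with a CCL-style induction on $N$: it writes $\el_{N,m}=\tfrac{1}{N-1}\sum_{k=1}^N\el_{N,m}^{(k)}$ with particle $k$ removed, derives $\Delta_{N,m}\ge\tfrac{N-m-1}{N-1}\Delta_{N-1,m}+\tfrac{m}{N-1}\Delta_{N-1,m-1}$, and the only norm input is that $\sum_k\pi_k$ is a projection on $\mathcal{X}_N$ because the $\pi_k$ are mutually orthogonal there. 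Your route can likely be made to work, but as written the balancing you defer to the end \emph{is} the proof, so the lower bound is not yet established.
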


\begin{proof}

The proof is based on an inductive argument that follows in essence the one in \cite{CCL} in which the spectral gap of the Kac model is computed exactly. We first prove the following claim for $1\leq m < N$:
 
\begin{equation} \label{E:claim}
\Delta_{N,m} \geq \frac{N-m-1}{N-1} \Delta_{N-1,m} + \frac{m}{N-1} \Delta_{N-1, m-1} \,\, .
\end{equation}

We let $\el_{N,m}^{(k)}$ be the evolution operator $\el_{N,m}$ with the $k^{th}$ particle removed:

\[ \el_{N,m}^{(k)} =  \frac{(N-1)\lambda} {\binom{N-1}{2}}\sum_{ \begin{array}{c} i{<}j\\ i,j \neq k \end{array}}^N (I - Q_{ij}) + \mu \sum_{ \begin{array}{c} l{=}1 \\ l {\neq} k\end{array}}^m (I- P_l). \]

\begin{rem} \label{R:induc}
$\el_{N,m}^{(k)}$ is also self-adjoint in $L^2(\mathbb{R}^N,\gamma)$, and will have $m$ or $m-1$ thermostats in it, depending on whether $k>m$ or $k \leq m$, respectively. Also, the coefficient of the Kac term corresponds to collisions among $N-1$ particles.
\end{rem}
 
Next we show that

\begin{equation} \label{E: motor}
 \el_{N,m} = \frac{1}{N-1} \sum_{k=1}^N \el_{N,m}^{(k)}.
\end{equation}

\noindent This follows, since

\begin{eqnarray*} 
             \sum_{k=1}^N \el_{N,m}^{(k)} & = & \sum_{k=1}^N 
\left( \frac{2 \lambda}{N-2} \sum_{\begin{array}{c} i<j,\\ i,j \neq k \end{array} }^N (I- Q_{ij}) + \mu \sum_{\begin{array}{c} l=1\\ l \neq k \end{array}}^m (I- P_l)  \right ) \\
                      & = & 2\lambda \sum_{i<j}^N (I - Q_{ij}) + (N-1) \mu \sum_{l=1}^m (I - P_l)\\
                      &= & (N-1) \el_{N,m}.
\end{eqnarray*}

Then 

\begin{equation} \label{E:conn}
\< h,\el_{N,m}[h]> = \frac{1}{N-1} \sum_{k=1}^N{\<h, \el_{N,m}^{(k)}[h]>}
\end{equation}

At this point, we want to introduce the gaps $\Delta_{N-1,m}$ and $\Delta_{N-1,m-1}$ for $N-1$ particles into the right hand side; for this, we will need the functions to be orthogonal to $1$ in the space $L^2(\mathbb{R}^{N-1},\gamma(\hat{v}_k))$, where $\gamma(\hat{v}_k)$ is the Gaussian $\gamma$ with the variable $v_k$ missing. To this end, we define the projections 
\[\pi_k[h]:=\int h \gamma(\hat{v_k})\, dv_1\dots dv_{k-1} dv_{k+1} \dots d{v_N}\]

and write, for each $k$, $\<h, \el_{N,m}^{(k)}[h]> = \<(h-\pi_k h),\el_{N,m}^{(k)}(h-\pi_k h)>$. This holds because the range of the projection $\pi_k$ is exactly the kernel of $\el_{N,m}^{(k)}$, and the operator $\el_{N,m}^{(k)}$ is self-adjoint. Thus, from \eqref{E:conn},

\[
\Delta_{N,m} = \frac{1}{N-1} \inf \sum_{k=1}^{N}{\<(h-\pi_k h),\el_{N,m}^{(k)}(h-\pi_k h)>}\,
\]

where the infimum is over $h \in \mathcal{X}_N$, $||h||=1$ as per the definition of the spectral gap. Since $(h-\pi_k h)$ is orthogonal to the constant function $1$ in $L^2(\mathbb{R}^{N-1},\gamma(\hat{v}_k))$ by construction, we use the definition of the spectral gap to write 

\begin{eqnarray*} 
\Delta_{N,m} &\geq& \frac{1}{N-1} \inf \left(\sum_{k=m+1}^N \Delta_{N-1,m}(||h-\pi_k h||^2) + \sum_{k=1}^m \Delta_{N-1,m-1}(||h-\pi_k h||^2)  \right) \text{ (by Remark \ref{R:induc})}\\
& = & \frac{1}{N{-}1} \inf \left( \Delta_{N-1,m} \sum_{k=m{+}1}^N (\vert\vert h \vert\vert ^2 - ||\pi_k h\vert\vert^2) + \Delta_{N-1,m-1} \sum_{k=1}^m (\vert\vert h||^2 - ||\pi_k h\vert\vert^2)\right)\\ & \geq & \frac{N-m}{N-1} \Delta_{N-1,m} + \frac{m}{N-1} \Delta_{N-1,m-1} - \frac{1}{N{-}1} \max\{\Delta_{N-1,m},\Delta_{N-1, m-1}\} \sup\sum_{k=1}^N \vert\vert \pi_k h\vert\vert^2 \ , 
\end{eqnarray*}

where we have used symmetry among $1,...,m$ and $m+1,...,N$ and the fact that the infimum is over functions with norm $1$.

First, we note that $\Delta_{N-1,m} \geq \Delta_{N-1,m-1}$ since $(I-P_m) \geq 0$. Next, $\sup\{\sum_{k=1}^N \vert\vert \pi_k h \vert\vert^2, h\in \mathcal{X}_N\}$ equals $\sup_{\mathcal{X}_N} \< h, \sum_{k=1}^N \pi_kh>$. Since $\{\pi_k\}_1^N$ is a collection of commuting projection operators, $\sum_{k=1}^N \pi_k$ is a projection and the supremum is $1$.

We then get
\[
\Delta_{N,m} \geq  \frac{N{-}m}{N{-}1} \Delta_{N{-}1,m} + \frac{m}{N{-}1} \Delta_{N{-}1,m{-}1} - \frac{1}{N-1} \Delta_{N-1,m},
\]

which implies claim \eqref{E:claim}.



We now prove the first inequality in Theorem \ref{T:gap}. The region of interest is $\{ (N,m): 1\leq m \leq N-1\}$. We will use induction on $N \geq 2$.

\begin{itemize}
\item The base case $N=2$, $m=1$ is the trivial statement $\Delta_{2,1}\geq \Delta_{2,1}$. 

\item Now suppose 
\begin{equation} \label{E:indu}
\Delta_{N,m} \geq \Delta_{2,1} \frac{m}{N-1}
\end{equation}
for all $m$ such that $1\leq m \leq N{-}1$. To show that $\Delta_{N+1,m} \geq \Delta_{2,1} \frac{m}{N}$ for all $m$ such that $1 \leq m \leq N$, consider the following two cases:

  \begin{itemize}
          \item $m=1$: We need to show that $\Delta_{N+1,1} \geq \frac{\Delta_{2,1}}{N}$. From \eqref{E:claim}, we deduce that \[\Delta_{N+1,1} \geq \frac{N-1}{N} \Delta_{N,1} + \frac{1}{N}\Delta_{N,0} = \frac{N-1}{N} \Delta_{N,1}\; .\]
In the above, we have $\Delta_{N,0}=0$ because when none of the particles are thermostated, the ground-state is degenerate (any radial function in $\mathbb{R}^N$ is an equilibrium for the Kac part). Applying \eqref{E:indu} with $m=1$ then completes the proof of this case.

          \item $1<m \leq N$:         
\begin{align*}
\Delta_{N+1,m} &\geq \frac{N{-}m}{N} \left( \frac{m \Delta_{2,1} }{N-1}\right)  +   \frac{m}{N} \left(\frac{(m{-}1) \Delta_{2,1}}{N-1}  \right) \text{ (using \eqref{E:claim} and \eqref{E:indu})}\\ &= \Delta_{2,1} \frac{m}{N(N-1)} (N-m + m -1) = \Delta_{2,1} \frac{m}{N}
\end{align*}

   \end{itemize}
\end{itemize}
This proves the first inequality in \eqref{E:est1}. We prove second inequality in \eqref{E:est1}, by finding an upper bound proportional to $\frac{ m }{N-1}$, for $\Delta_{N,m}$. This can be done by finding a (possibly crude) upper bound on
the eigenvalues of $\el_{N,m}$ on the space of second degree Hermite polynomials with weight $\gamma$. This space is invariant under $\el_{N,m}$ and its action on it with basis $\{ \sum_{k=m+1}^N H_2(v_k),\sum_{k=1}^m H_2(v_k) \}$ can be described by the following matrix (as mentioned before, this is related to the evolution of kinetic energy of the system).
We use the identities $Q_{ij}H_2(v_i) = (H_2(v_i)+H_2(v_j))/2$ and $Q_{ij}H_2(v_k) = H_2(v_k)$ for $i,j \neq k$ in obtaining the entries.

\[
\begin{pmatrix}
\frac{\lambda m}{N-1} & \frac{-\lambda m}{N-1} \\
-\frac{\lambda(N-m)}{N-1} & \frac{\lambda (N-m)}{N-1} + \mu
\end{pmatrix}.
\]

Its smallest eigenvalue is $\frac{1}{2}(\mu + \frac{N\lambda}{N-1}) \bigl( 1-\sqrt{1- \frac{4 m \lambda \mu}{N-1} \frac{1}{(\mu + \frac{N\lambda}{N-1})^2}} \bigr)$. Hence, by definition of the gap,

\[\Delta_{N,m} \leq \frac{1}{2}(\mu + \frac{N\lambda}{N-1}) \bigl( 1-\sqrt{1- \frac{m }{N-1} \frac{4\lambda \mu}{(\mu + \frac{N\lambda}{N-1})^2}} \bigr) \]

For $N$ large enough, we can write 

\[\Delta_{N,m} \leq \frac{1}{2}\left(\mu + \frac{N\lambda}{N-1}\right)\frac{m}{N-1} \frac{4\lambda \mu}{(\mu + \frac{N\lambda}{N-1})^2} \]

or 

\[ \Delta_{N,m} \leq \frac{m}{N-1} \frac{2\lambda \mu}{\mu + \lambda}\]

\end{proof}

Thus, as we are close to equilibrium, $h \to 0$ in $L^2(\mathbb{R}^N,\gamma)$ at an exponential rate $\Delta_{N,m}$, which for large $N$, is proportional to the fraction of thermostated particles. 

\section{van Hove Limit} \label{S:vh} 

In this section, we relate the strong and weak thermostats by studying the two-particle system ($N=2$, $m=1$) described by eq.~\eqref{E:mast}: 

\begin{equation} \label{E:hvh2}
\frac{\partial f^\lambda}{\partial t} = - 2\lambda (I-Q_{12})f^{\lambda} - \mu (I-R_1)f^\lambda =: -\mathcal{G}^\lambda f^\lambda \,\, ,
\end{equation}
 
Here the superscript makes it explicit that the solution depends on $\lambda$.
 
Particle $2$ interacts through the Kac collision with Particle $1$, which is given the Gaussian distribution $g(v)=\sqrt{\frac{1}{2\pi}}e^{-\frac{v^2}{2}}$ at random times due to the action of the strong thermostat $R_1$. We increase the rate $\mu$ at which the strong thermostat acts relative to the rate of the Kac collision $2\lambda$. This can be achieved by  increasing the time scale of the Kac operator $\frac{1}{2\lambda}\rightarrow \infty$ and sampling at longer time intervals $\displaystyle \tau:= t \lambda$. Thus, the thermostat, operating on a much smaller time-scale, becomes powerful in the limit. The result is that by passing through a van Hove (weak-coupling, large time) limit \cite{DA} of this system, Particle $2$ gets thermostated ``weakly", via its interaction with Particle $1$ whose distribution is essentially always $g(v)$.

We are interested in the evolution of $ \tilde{f}^\lambda(v_1, v_2,\tau):= f^\lambda(v_1,v_2,\frac{\tau}{\lambda})$ in the limit $ \lambda \to 0 $. Here $f^\lambda(v_1,v_2,t)$ satisfies \eqref{E:hvh2} above. The equation satisfied by $\tilde{f}^{\lambda}(v_1,v_2,\tau)$ is then:

\begin{equation} \label{E:til}
\frac{\partial \tilde{f}^{\lambda}}{\partial \tau} = -2(I-Q_{12}) \tilde{f}^{\lambda} - \frac{\mu}{\lambda}(I - R_1)\tilde{f}^{\lambda} =: -\frac{\mathcal{G}^\lambda}{\lambda} \tilde{f}^\lambda
\end{equation}

We have the following theorem, which states that the diagram in Figure $1$ commutes. 

\begin{thm} \label{T:vh}\end{thm}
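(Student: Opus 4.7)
The plan is a standard multiscale (van Hove) asymptotic argument, exploiting that $R_1$ is an orthogonal projection on $L^2(\mathbb{R}^2,\gamma)$ whose range $\mathcal{K}=\{g(v_1)h(v_2):h\in L^2(\mathbb{R},g\,dv)\}$ is exactly the kernel of the fast operator $I-R_1$. In \eqref{E:til} the generator splits into a fast piece $\frac{\mu}{\lambda}(I-R_1)$, diverging as $\lambda\to 0$, and a bounded slow piece $2(I-Q_{12})$. I would first derive the limiting equation via a formal perturbation series $\tilde f^\lambda=\tilde f_0+\lambda \tilde f_1+O(\lambda^2)$. The $\lambda^{-1}$ balance forces $(I-R_1)\tilde f_0=0$, so $\tilde f_0(v_1,v_2,\tau)=g(v_1)\tilde h(v_2,\tau)$ for some $\tilde h$ to be determined. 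At order $\lambda^0$, applying the projection $R_1$ eliminates $(I-R_1)\tilde f_1$ by the Fredholm alternative and leaves the effective evolution $\partial_\tau \tilde f_0=-2 R_1(I-Q_{12})\tilde f_0$ on $\mathcal{K}$.

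Next I would identify the effective generator. Starting from
\[
Q_{12}[g(v_1)\tilde h(v_2)]=\dashint_0^{2\pi} g(v_1\cos\theta+v_2\sin\theta)\,\tilde h(-v_1\sin\theta+v_2\cos\theta)\,d\theta,
\]
I apply $R_1$, which integrates $v_1$ against the Lebesgue measure and multiplies by $g(v_1)$; after the change of variables $\theta\mapsto -\theta$ (and relabeling the integration variable $v_1\to w$) one recognizes the integrand as that of the weak thermostat \eqref{E:weak}, giving $R_1 Q_{12}[g\otimes \tilde h]=g(v_1)\,U[\tilde h](v_2)$. Since $R_1(g\otimes \tilde h)=g\otimes \tilde h$, the limit equation becomes
\[
\partial_\tau \tilde h=-2(I-U)\tilde h,
\]
which is precisely the weak-thermostat evolution of \eqref{E:entrweak} at rate $\eta=2$; this is the content of the commuting diagram.

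To make the convergence rigorous, I would use a semigroup/Trotter--Kato argument. Both $I-R_1$ and $I-Q_{12}$ are bounded self-adjoint contractions on $L^2(\mathbb{R}^2,\gamma)$, so $\mathcal{G}^\lambda/\lambda$ generates a uniformly continuous Markov semigroup. Combining Duhamel's formula with the decomposition $\tilde f^\lambda=R_1\tilde f^\lambda+(I-R_1)\tilde f^\lambda$, one shows $\|(I-R_1)\tilde f^\lambda(\cdot,\tau)\|\le e^{-\mu\tau/\lambda}\|(I-R_1)\tilde f^\lambda(\cdot,0)\|+O(\lambda)$, so the component orthogonal to $\mathcal{K}$ relaxes on the $O(\lambda)$ fast scale, and the $R_1$-component satisfies an equation that converges as $\lambda\to 0$ to $\partial_\tau \tilde h=-2(I-U)\tilde h$ with initial datum $\tilde h(v_2,0)=\int f(w,v_2,0)\,dw$. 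Convergence of the semigroups then follows from resolvent convergence on the invariant subspace $\mathcal{K}$ and uniform contractivity off of it.

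The main obstacle is the initial-layer behavior: if the initial datum $f(\cdot,0)$ is not in $\mathcal{K}$, then $\tilde f^\lambda(\cdot,0)\notin\mathcal{K}$, and convergence to $g(v_1)\tilde h(v_2,\tau)$ can only hold for $\tau>0$ (not uniformly in $\tau$ down to $0$). Handling this layer cleanly requires either restricting to initial data already in $\mathcal{K}$, or proving that $\|(I-R_1)\tilde f^\lambda\|$ decays exponentially fast on the $\lambda$-scale and showing the resulting singular contribution to the Duhamel integrand is $O(\lambda)$. Since $I-Q_{12}$ and $R_1$ are explicit bounded operators and the two-particle setting admits explicit Hermite-polynomial decomposition (as used in Appendix \ref{A:sp2}), a second, more pedestrian route is to expand $\tilde f^\lambda$ in a tensor-product Hermite basis, diagonalize the matrix of $\mathcal{G}^\lambda/\lambda$ on each finite-dimensional invariant subspace, and check mode-by-mode that the eigenvalues splitting as $\lambda\to 0$ produce one fast branch $\mu/\lambda$ (killing the $(I-R_1)$-component) and one slow branch converging to the corresponding eigenvalue of $2(I-U)$.
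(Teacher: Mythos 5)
Your mechanism is the right one and your key identity $R_1Q_{12}[g\otimes\tilde h]=g\otimes U[\tilde h]$ (equivalently $R_1Q_{12}R_1=U_2R_1$, after the change of variables $\theta\mapsto-\theta$) is exactly the identity the paper uses; your initial-layer discussion also correctly explains why the statement holds only for $\tau>0$ with initial datum $R_1\phi/g$. But your route is genuinely different from the paper's. The paper does not use a two-scale expansion or Trotter--Kato: it writes $e^{-\frac{\mu\tau}{\lambda}(I-R_1)}=I+(I-R_1)(e^{-\mu\tau/\lambda}-1)$ (idempotence of $I-R_1$), expands $e^{-\frac{\tau}{\lambda}\mathcal G^\lambda}$ in a Dyson series with $-2(I-Q_{12})$ as the perturbation, shows each term $b_k(\phi)$ converges in $L^1$ to $\frac{1}{k!}(-2(I-U_2))^k R_1\phi$ because every fast factor converges to $R_1$ at rate $e^{-\mu\tau/\lambda}$, and closes with the uniform bound $\|b_k(\phi)\|_1\le\frac{(4\tau)^k}{k!}\|\phi\|_1$ and dominated convergence. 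That argument works directly in $L^1(\mathbb R^2)$ for arbitrary $\phi\in L^1$, which is what the theorem asserts. Your semigroup route buys a cleaner conceptual picture (fast relaxation onto $\mathcal K=\mathrm{ran}\,R_1$, effective generator $R_1(-2(I-Q_{12}))R_1$) and would generalize readily to the $N$-particle van Hove statements listed after the theorem.

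Two concrete issues you should fix before your sketch becomes a proof. First, $R_1$ is \emph{not} an orthogonal projection (nor self-adjoint) on $L^2(\mathbb R^2,\gamma)$: it is $P_1$, acting on $h=f/\gamma$, that is self-adjoint there; $R_1$ acting on densities is an orthogonal projection on $L^2(\mathbb R^2,\gamma^{-1}dv)$. More importantly, the theorem is an $L^1(\mathbb R^2)$ statement for $L^1$ data, and a general $L^1$ density lies in neither weighted $L^2$ space, so any Hilbert-space argument needs a density-plus-uniform-boundedness step to conclude. The good news is that your Duhamel decomposition does not actually need the Hilbert structure: $R_1$, $Q_{12}$ and hence $e^{-u\frac{\mu}{\lambda}(I-R_1)}$ are $L^1$-contractions, so your estimate $\|(I-R_1)\tilde f^\lambda(\tau)\|_1\le e^{-\mu\tau/\lambda}\|(I-R_1)\phi\|_1+O(\lambda)$ holds in $L^1$, and feeding it into the equation for $R_1\tilde f^\lambda$ plus Gronwall gives the limit directly. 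Second, the step you state as ``convergence of the semigroups then follows from resolvent convergence on $\mathcal K$ and uniform contractivity off of it'' is precisely the content of the theorem and is asserted rather than proved; either carry out the Duhamel--Gronwall argument on the two components in $L^1$, or cite and verify the hypotheses of a singular-perturbation semigroup convergence theorem. As written, the rigorous core is missing.
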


Let $\tilde{f}^{\lambda}$ satisfy eq.~\eqref{E:til} with initial condition $\tilde{f}^{\lambda}(v_1,v_2,0)=\phi(v_1,v_2) \in L^1(\mathbb{R}^2)$. Then for $\tau>0, \, \displaystyle
\lim_{\lambda \to 0} \tilde{f}^{\lambda}=:g(v_1)\tilde{f}(v_2,\tau)$ exists in $L^1(\mathbb{R}^2)$, where $\tilde{f}$ satisfies the equation

\begin{equation}\label{E:vh}
 \frac{\partial \tilde{f}}{\partial \tau} = -2(I-U_2) \tilde{f} 
\end{equation}

together with the initial condition $\tilde{f}(v_2,0)=\frac{R_1 \phi(v_1,v_2)}{g(v_1)}$. $U_2$ is the weak thermostat \eqref{E:weak} acting on $v_2$. 

\begin{figure}[ here]
\begin{center}
\begin{tikzpicture}
  \matrix (m) [matrix of math nodes,row sep=3em,column sep=4em,minimum width=2em]
  {
     \phi(v_1, v_2) & \tilde{f}^{\lambda}(v_1, v_2, \tau) \\
     g(v_1)\tilde{f}(v_2,0) & \mbox{  }g(v_1)\tilde{f}(v_2,\tau)   \\};
  \path[-stealth]
    (m-1-1) edge node [left] {$R_1$} (m-2-1)
            edge node [above] {$e^{-\frac{\tau}{\lambda} \mathcal{G}^{\lambda}} $} (m-1-2)
    (m-2-1) edge node [below] {$e^{-2\tau(I-U_2)} $} (m-2-2)
    (m-1-2) edge node [right] {$\lambda\rightarrow 0$} (m-2-2);
\end{tikzpicture}
\caption{van Hove Limit}
\end{center}
\end{figure}

\begin{proof}
We can write $e^{-\frac{\mu}{\lambda} \tau(I-R_1)} = I + (I- R_1)(e^{-\mu \tau/\lambda}- 1)$ because $(I-R_1)$ is idempotent. This implies that 

\begin{equation}\label{E:vhbound}
\vert\vert e^{-\frac{\tau \mu}{\lambda} (I-R_1)}- R_1 \vert\vert_{1} =  e^{-\mu \frac{\tau}{\lambda}} \vert\vert I - R_1\vert\vert_1\leq 2e^{-\mu \frac{\tau}{\lambda}}.
\end{equation}

\vspace{0.5cm}

For each $\lambda$, the operators in $\frac{1}{\lambda}\mathcal{G}^\lambda$ are bounded. Thus, the Dyson expansion (the infinite series version of the Duhamel formula) corresponding to the evolution in \eqref{E:til} gives $ e^{-\frac{\tau}{\lambda} \mathcal{G}^\lambda} \phi   =  \sum_{k=0}^\infty b_k(\phi)$ where

\begin{eqnarray*}
       b_0 (\phi) & = & e^{-\frac{\mu}{\lambda} (I-R_1) \tau} \phi, \\
       b_1 (\phi) & = & \int_{t_1=0}^\tau e^{-\frac{\mu}{\lambda} (I-R_1) (\tau-t_1)} [-2(I- Q_{12})]  e^{-\frac{\mu}{\lambda} (I-R_1) t_1} \phi\,\,dt_1 \mbox{, and}\\
    b_k(\phi) & = & \int_{ \{0\leq t_k \leq \dots t_1 \leq \tau\}} e^{-\frac{\mu}{\lambda} (I-R_1) (\tau-t_1)}[-2(I- Q_{12})] e^{-\frac{\mu}{\lambda} (I-R_1) (t_1-t_2)} \dots [-2(I- Q_{12})] e^{-\frac{\mu}{\lambda} (I-R_1) (t_k)}\phi\,d\vec{t}\\     
\end{eqnarray*}

\noindent Using \eqref{E:vhbound} and the identity $R_1 Q_{12} R_1 = R_1 U_2 = U_2 R_1$, we show that $\forall k$, $b_k(\phi)$ converges to 

\[\int_{\{0\leq t_k \leq \dots t_1 \leq \tau\}} R_1 [-2(I- Q_{12})]R_1 \dots [-2(I- Q_{12})] R_1 \phi \,\, d\vec{t}=  \frac{1}{k!}\left( -2(I-U_2)\right)^k (R_1 \phi)\] 

in $L^1$ as $\lambda\rightarrow 0$.

Finally, we use the fact that for each $u\geq 0$, $\vert\vert e^{-\frac{\mu}{\lambda} (I - R_1) u} \phi \vert\vert_1  = \vert\vert \phi \vert\vert_1$ 
and $\vert\vert (I-Q_{12}) \phi \vert\vert_1 \leq 2 \vert\vert \phi \vert\vert_1$ so that 
$\vert\vert b_k(\phi) \vert\vert\leq 4^k \int_{\{0\leq t_k \leq \dots t_1 \leq \tau\}} dt_1 \dots dt_k  \vert\vert \phi \vert\vert_1 = \frac{(4\tau)^k}{k!}\vert\vert \phi \vert\vert_1$, 
independently of $\lambda$. Therefore the dominated convergence theorem can be applied to give

\begin{eqnarray*}
 \lim_{\lambda\rightarrow 0} e^{-\frac{\tau}{\lambda} \mathcal{G}^\lambda} \phi & = & \lim_{\lambda\rightarrow 0} \sum_{k=0}^\infty b_k(\phi) = \sum_{k=0}^\infty \lim_{\lambda\rightarrow 0} b_k(\phi)\\
       & = & \sum_{k=0}^\infty (-2(I-U_2))^k \frac{\tau^k}{k!} (R_1 \phi) = e^{-2(I - U_2)\tau} (R_1 \phi).
\end{eqnarray*}

\end{proof}

The above proof can be generalized to give the following van Hove results for the $N$-particle case. We will use the statement ``the van Hove limit of $\{ A(\lambda): \lambda>0\}$ as $\lambda\rightarrow 0$
is $A^\ast$ with idempotent operator $B$'' to mean $\lim_{\lambda\rightarrow 0} e^{-\frac{\tau}{\lambda} A(\lambda)} \phi = e^{-\tau A^\ast} (B\phi) = B e^{-\tau A^\ast} \phi$ for all $\tau>0$ and all $\phi \in L^1$.

\begin{itemize}
 \item The van Hove limit of $\{ \lambda \sum_{j=2}^N ( I - Q_{1j}) + \mu (I - R_1)\}$, acting on $L^1(\RR^N)$, is $\sum_{j=2}^N(I-U_j)$ with idempotent operator $R_1$. 
 \item The van Hove limit of $\{N \lambda (I-Q) + \mu (I-R_1)\}$, acting on $L^1(\RR^N)$ is $\frac{2}{N-1}\sum_{j=2}^N (I-U_j) + \frac{N-2}{N-1} (N-1)(I- Q^{(1)})$ 
with idempotent operator $R_1$. Here $Q^{(1)} = \frac{2}{N-2} \sum_{2 \leq i<j} (I-Q_{ij})$ is the Kac operator acting on particles $2, \dots, N$. 
\item Let $\alpha= \frac{2N-1}{N-1}$. The van Hove limit of $\{ \lambda \alpha 2N(I-Q_{(2N)}) + \mu \sum_{i=N+1}^{2N} (I-R_i)\}$, acting on $L^1(\RR^{2N})$,
is $N(I-Q_{(N)}) + \frac{2N}{N-1} \sum_{i=1}^N (I-U_i)$, with idempotent operator $R_{N+1} R_{N+2} \dots R_{2N}$. Here $Q_{(N)}$ and $Q_{(2N)}$ are the Kac operators acting on particles $v_1, \dots, v_{N}$ and $v_1, \dots, v_{2N}$ respectively.
\end{itemize}

The first two results show that having one strongly thermostated particle is sufficient to``weakly" thermostat each particle colliding with it in the van Hove limit. The strength of this thermostat will be $O(\frac{1}{N})$ under the usual Kac collision unless $\sim N$ strongly thermostated particles are used, as in the third result.

\begin{rem}\label{R:BLV}
The third result shows that up to a constant in the thermostat terms, it is possible to obtain the model in \cite{BLV} as a van Hove limit of models in which half the particles are strongly thermostated.
\end{rem}

\section{Approach to Equilibrium in Entropy} \label{S:entr}

In this section, we study the behavior of the relative entropy functional

\begin{equation} \label{E:relent}
S(f|\gamma):= \int f\log \frac{f}{\gamma} d\V
\end{equation}

under the evolution~\eqref{E:mast}. This is a standard way to track the approach to equilibrium since it satisfies $S(f|\gamma) \geq 0$ and $S(f|\gamma)=0 \Leftrightarrow f=\gamma$. For our model, we show below that $S(f(.,t)|\gamma) \to 0$ as $t \to \infty$, provided the initial distribution $f(.,0)$ has finite relative entropy. 

Set $f=\gamma h$ (this is slightly different from the ground-state transformation \eqref{E:gs}), and restrict to $h \geq 0$, $\int h \gamma d\V=1$. The evolution equation obeyed by $h(\V,t)$ is eq.~\eqref{E:hmast}, which we restate below:

\[\frac{\partial h}{\partial t} = N \lambda (Q-I) h + \mu \sum_{k=1}^m(P_k-I)h = -\el_{N,m} h \,\, . \]

The relative entropy then becomes $\int h \log h \, \gamma \, d\V$, which we denote by $S(h)$ (overloading the notation) for the remainder of this section.

Now, 
\[\frac{dS}{dt} \, = \int \frac{\partial h}{\partial t} \log h \, \gamma d\V + \int \frac{h}{h} \frac{\partial h}{\partial t} \, \gamma d\V  \,= \int \frac{\partial h}{\partial t} \log h \, \gamma d\V \, ,\]

where the second term vanishes because the normalization $\int h \gamma \, d\V =1$ is preserved by the evolution. Hence,
\[
\frac{dS}{dt} =\int \left(N \lambda (Q-I) h + \mu \sum_{k=1}^m(P_k-I)h\right) \log h \, \gamma d\V \, .
\]

We know (from \cite{Kac}) that $\int N (Q-I) h \log h \, \gamma d\V \leq 0$. Also, 
\begin{align*}
\int P_k h \log h \, \gamma d\V &= \int P_k h \,\, P_k(\log h) \gamma d\V \,\,\,\, \text{(by self-adjointness of $P_k$ as observed in Section~\ref{S:L2})}\\
& \leq \int (P_k h) \log(P_k h) \gamma d\V \,\,\,\, \text{(by concavity of $\log$ and averaging property of $P_k$)}\\
& \leq \int h \log h \, \gamma d\V \,\,\,\, \text{(by convexity of $x\log x$)}
\end{align*}

Thus $\frac{dS}{dt} \leq 0$. The following theorem indicates how fast the relative entropy decays under the evolution.

\begin{thm} \label{T:duh}
Assume $1 \leq m <N$ and let $h(\V,t)$ be the solution of \eqref{E:hmast}.
Then we have that 
 \begin{equation} \label{eq:main}
  S(h(\V,t)) \leq \left(-\frac{\delta_- e^{-\delta_+t}}{\delta_{+}-\delta_{-}} + \frac{\delta_+ e^{-\delta_-t}}{\delta_{+}-\delta_{-}} \right)S(h(\V,0))
 \end{equation}

where $\delta_{\pm} \equiv \delta_{\pm}(N,m)=\left(\frac{N\lambda+\mu}{2} \pm \frac{1}{2}\sqrt{(N\lambda+\mu)^2-4m\lambda \mu/(N-1)}\right)$. 
\end{thm}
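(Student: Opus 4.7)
The plan is to express $h(t) = e^{-tL}h(0)$ as a probability-weighted convex combination of Markov-processed copies of $h(0)$ and apply the convexity of the relative entropy $S(\cdot|\gamma)$, adapting the method of~\cite{BLV} for the single-particle weak thermostat.

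First, writing $L = cI - A$ with $c := N\lambda + m\mu$ and $A := N\lambda Q + \mu\sum_{k=1}^m P_k$, the operator $A/c$ is itself a convex combination of the Markov operators $Q, P_1,\ldots, P_m$, so the semigroup admits the Poisson-weighted representation
\[
h(t) = \sum_{n=0}^{\infty} e^{-ct}\,\tfrac{(ct)^n}{n!}\,(A/c)^n h(0),
\]
and by the convexity of $S$,
\[
S(h(t)|\gamma) \leq \sum_{n=0}^{\infty} e^{-ct}\,\tfrac{(ct)^n}{n!}\,\mathbb{E}_{w_n}\bigl[S(\sigma_{w_n}h(0)|\gamma)\bigr],
\]
where $w_n$ is a weighted random length-$n$ word in $\{Q, P_1, \dots, P_m\}$ and $\sigma_{w_n}$ is the corresponding ordered product of operators. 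Equivalently, this is the path-integral interpretation of the underlying Markov jump process.

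The key step, analogous to the contraction lemma $S(Uf|g) \leq \tfrac{1}{2}S(f|g)$ in~\cite{BLV}, is to bound the above expectation. In contrast to that case, applying a single $P_k$ here does not give a multiplicative contraction (since $h$ could already be independent of $v_k$), so the effective contraction must come from two-event compositions: a Kac collision $Q_{ij}$ involving a thermostated particle, followed by a thermostat event $P_k$ on that particle. The combinatorial factor $\frac{m}{N-1}$—the conditional probability that a randomly chosen Kac partner of a given particle is thermostated—combined with the Kac rate $\lambda$ and the single-thermostat rate $\mu$, yields the ODE coefficient $d = \frac{m\lambda\mu}{N-1}$. The vanishing initial derivative $D_{N,m}'(0) = 0$ reflects precisely this ``at-least-two-events'' mechanism, in contrast to the linear-in-$t$ decay in~\cite{BLV}.

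Collecting these contributions via a second iteration of Duhamel's formula yields a Volterra-type integral inequality for $S(t) := S(h(t)|\gamma)$ whose resolvent is $D_{N,m}$; equivalently, one obtains by comparison the second-order ODE bound $S(t) \leq D_{N,m}(t) S(0)$ where $D_{N,m}$ solves $D'' + (N\lambda+\mu)D' + \tfrac{m\lambda\mu}{N-1}D = 0$ with $D(0) = 1$, $D'(0) = 0$. The principal technical obstacle is making the two-event contraction rigorous within the convexity framework and justifying the combinatorics that produce the coefficient $c' = N\lambda + \mu$ (rather than the naive total rate $c = N\lambda + m\mu$); this likely requires exploiting the symmetry of $h$ under exchange of $v_1,\dots, v_m$ to effectively replace the sum $\mu\sum_k P_k$ by a single-particle thermostat contribution.
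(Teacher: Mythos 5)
You have correctly identified the architecture of the argument -- a Dyson-type expansion combined with convexity of the entropy, the fact that a single $P_k$ gives no multiplicative contraction, the necessity of a two-event mechanism $P_k Q_{ij}$, and even the correct second-order ODE $D''+(N\lambda+\mu)D'+\tfrac{m\lambda\mu}{N-1}D=0$ with $D(0)=1$, $D'(0)=0$ whose solution is the stated $D_{N,m}$. But the crux of the proof is exactly the step you flag as ``the principal technical obstacle'' and leave unproven: a quantitative entropy contraction for the composition $P_1Q_{1j}$. The paper supplies this in Lemma \ref{L:half} via the continuous Han (Loomis--Whitney) inequality: writing $S(P_1Q_{1j}h)$ as a conditional entropy relative to $P_1P_jh$ and applying the symmetric Han bound to $Q_{1j}h/(P_1P_jh)$ as a function of the two variables $(v_1,v_j)$ gives $S(P_1Q_{1j}h)\le \tfrac12 S(Q_{1j}h)+\tfrac12 S(P_1P_jh)$, and summing over $j$ with a second application of Han's inequality yields $\sum_{j=2}^N S(P_1Q_{1j}h)\le (N-1-\tfrac12)S(h)$. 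Without an argument of this kind your proposal establishes only the trivial bound $S(h(t))\le S(h(0))$, so as written the proof is incomplete.

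A second, more structural issue: your Poisson representation expands the \emph{entire} generator at total rate $c=N\lambda+m\mu$, and you correctly worry that this produces the wrong coefficient. The paper sidesteps this by choosing a different split: it expands only in $N\lambda Q$ and keeps the thermostat semigroup $e^{\mu\sum_k(P_k-I)t}$ intact as the ``free'' evolution. The $m$ thermostats are then handled by induction on $m$ (Lemma \ref{L:2entr}), peeling off one projection at a time via $e^{\mu(P_m-I)t}=e^{-\mu t}I+(1-e^{-\mu t})P_m$; this is where the single $\mu$ (rather than $m\mu$) in $\delta_\pm$ comes from, not from the exchange symmetry of $h$ in $v_1,\dots,v_m$ as you conjecture. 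The resulting bound $S(e^{\mu\sum_k(P_k-I)t}Qh)\le A(t)S(h)$ with $A(t)=1-\tfrac{m(1-e^{-\mu t})}{N(N-1)}$ turns the Dyson series into the convolution series $e^{-N\lambda t}\sum_k(N\lambda)^k(A^{*k}*1)$, which the paper sums by Laplace transform to obtain $D_{N,m}$ exactly -- equivalent to, but cleaner than, the Volterra/ODE comparison you sketch. I would recommend reorganizing your expansion along these lines and then supplying the Han-inequality lemma; with those two repairs your outline becomes the paper's proof.
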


We first state a few observations on the above bound. Let us define 
 
 \[D(t):=-\frac{\delta_- e^{-\delta_+t}}{\delta_{+}-\delta_{-}} + \frac{\delta_+ e^{-\delta_-t}}{\delta_{+}-\delta_{-}} \, .\] 

As expected, $D(t)$ is identically equal to $1$ when $\lambda$ or $\mu$ is $0$. For $\lambda, \mu >0$, $\displaystyle\lim_{t \to \infty} D(t) = 0$, $D(t)$ is equal to $1$ at $t=0$ and it is a decreasing function of $t>0$. The last claim can be seen by computing

\begin{equation} \label{E:entr0}
\frac{dD}{dt} = \frac{\delta_- \delta_+}{\delta_{+}-\delta_{-}} \bigl( e^{-\delta_+ t}-e^{-\delta_- t}\bigr) \leq 0
\end{equation}

since $\delta_-<\delta_+$. For large $t$, the dominant term in the bound \eqref{eq:main} is $e^{-\delta_- t}$, and for large $N$, $\delta_- \sim \frac{m\lambda \mu}{(N-1)(N\lambda+\mu)}$. Hence, we obtain an eventually exponential decay of relative entropy through this bound, albeit with decay constant $\sim \frac{m}{N^2}$.

In this paragraph, we make a few remarks about the bound for the special case $N=2, m=1$. Observe that $\delta_-(2,1) =\Delta_{2,1}$ is the spectral gap of $2\lambda(I-Q)+\mu(I-P_1)$ (see \eqref{E:gap2}). As an aside, note that this is in accordance with \eqref{E:entbd}. Upon making the transformation $(\mu,\lambda) \rightarrow (\frac{\mu}{\lambda},1)$ corresponding to the van Hove limit (see eq.~\eqref{E:til}), we obtain $D(t) \rightarrow e^{-t}$ as 
 $\lambda \rightarrow 0$. This is exactly the optimal entropy production bound \eqref{E:entrweak} for the weak thermostat (Note: the weak thermostat here appears with a factor of $2$, owing to the $2\lambda$ term).

The Theorem is proved as follows: we write $h(\V,t)$ explicitly in terms of the exponential of the generator of the evolution, expand the latter using the Dyson series and use the convexity of the entropy. We exploit the entropic \emph{contraction} of terms of the form $P_jQ$ in the expansion. These steps will yield a non-trivial bound for the entropy at time $t$ in terms of the initial entropy.

The following lemmas build up to the evolution operator $e^{-\el_{N,m}t}$ in steps. For instance, Lemma \ref{L:half} bounds some of the terms obtained by decomposing the Kac operator in the expression $S(P_1Qh)$. Throughout, we assume that $h \in L^1(\mathbb{R}^N,\gamma)$ and $h \geq 0$.

\begin{lem} \label{L:half}
We have
\[ \sum_{j=2}^N S(P_1 Q_{1j}h) \leq \bigl((N-1) - \frac12 \bigr) S(h)\]
\end{lem}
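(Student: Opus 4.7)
The plan is to combine strong subadditivity of entropy (SSA) with Han's inequality for relative entropy, both applied with respect to the Gaussian product measure $\gamma$. Fix $j \in \{2,\dots,N\}$. Applying SSA to the density $Q_{1j}h$ with the partition $\gamma = g(v_1) \otimes g(v_j) \otimes \gamma_{\widehat{1j}}$ (where $\gamma_{\widehat{1j}}$ is the Gaussian on the remaining $N-2$ coordinates), nonnegativity of conditional mutual information yields
\[
S(Q_{1j}h) + S(P_1 P_j Q_{1j}h) \;\geq\; S(P_1 Q_{1j}h) + S(P_j Q_{1j}h).
\]

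Next I would apply three simplifications. First, $Q_{1j}h$ is invariant under the swap $v_1 \leftrightarrow v_j$, since $(v_1,v_j)$ and $(v_j,v_1)$ lie on the same rotation orbit, so renaming the integration variable gives $S(P_1 Q_{1j}h) = S(P_j Q_{1j}h)$. Second, the rotation $(v_1,v_j) \mapsto (v_1\cos\theta + v_j\sin\theta,\,-v_1\sin\theta + v_j\cos\theta)$ has Jacobian $1$ and preserves the weight $g(v_1)g(v_j)$, so integrating $v_1$ and $v_j$ against this weight at each fixed $\theta$ collapses the average over $\theta$ and produces the identity $P_1 P_j Q_{1j}h = P_1 P_j h$. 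Third, $Q_{1j}$ is a conditional expectation (onto $v_1^2+v_j^2$ together with the remaining coordinates), so $S(Q_{1j}h) \leq S(h)$ by Jensen. Combining these three facts,
\[
2\,S(P_1 Q_{1j}h) \;\leq\; S(h) + S(P_1 P_j h).
\]

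Summing over $j=2,\dots,N$ gives
\[
2\sum_{j=2}^N S(P_1 Q_{1j}h) \;\leq\; (N-1)\,S(h) + \sum_{j=2}^N S(P_1 P_j h).
\]
To close the argument I would view $P_1 h$ as a probability density on $\mathbb{R}^{N-1}$ with respect to $g(v_2)\cdots g(v_N)$; the quantities $P_1 P_j h$ for $j=2,\dots,N$ are then precisely its $(N-2)$-dimensional single-variable integrated marginals. Han's inequality for relative entropy with respect to a product measure, which follows from the absolute-entropy Han together with the product structure of the reference (or, equivalently, from iterated SSA), yields
\[
\sum_{j=2}^N S(P_1 P_j h) \;\leq\; (N-2)\,S(P_1 h) \;\leq\; (N-2)\,S(h),
\]
with the last step by monotonicity of relative entropy under $P_1$. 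Substituting gives $2\sum_{j=2}^N S(P_1 Q_{1j}h) \leq (2N-3)\,S(h)$, which is the desired bound.

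The main subtlety is the identity $P_1 P_j Q_{1j}h = P_1 P_j h$: it is what allows the Kac averaging to be absorbed and the right-hand side to be expressed purely in terms of marginals of $h$. The rotation change-of-variables that proves it is elementary but should be written out carefully. The remaining ingredients — SSA, Han's inequality, and the swap-symmetry of $Q_{1j}h$ — are organizational, and the case $N=2$ (where $S(P_1 P_2 h)=0$) recovers the familiar tensorization bound $S(P_1 Q_{12}h)\leq \tfrac12 S(h)$ as a sanity check.
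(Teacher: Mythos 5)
Your proposal is correct and is essentially the paper's own argument in different packaging: the strong subadditivity step $S(Q_{1j}h)+S(P_1P_jQ_{1j}h)\geq S(P_1Q_{1j}h)+S(P_jQ_{1j}h)$ combined with the swap symmetry of $Q_{1j}h$ is exactly the paper's conditional application of the symmetric two-variable Han inequality to $Q_{1j}h/P_1P_jh$, and you use the same key identity $P_1P_jQ_{1j}h=P_1P_jh$, the same contraction $S(Q_{1j}h)\leq S(h)$, and the same final application of Han's inequality to $P_1h$ giving $\sum_{j=2}^N S(P_jP_1h)\leq (N-2)S(P_1h)\leq (N-2)S(h)$. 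The bookkeeping closes identically to yield $\bigl((N-1)-\tfrac12\bigr)S(h)$, so no changes are needed.
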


\begin{proof}
In the following proof, we will apply the continuous version of Han's inequality \cite{Han} (this also follows from the Loomis-Whitney inequality \cite{LW}) for the entropy rewritten to suit our situation:

\begin{equation} \label{E:Han}
\sum_{j=1}^N S(P_jh) \leq (N-1)S(h)
\end{equation}

Note that if $h$ is symmetric in its arguments, this amounts to saying that for each $j=1,..,N$, 
\begin{equation} \label{E:Hansym}
S(P_j h) \leq \frac{N-1}{N} S(h)
\end{equation}
For $j>1$, 
\begin{align*}
S(P_1Q_{1j}h) &= \int P_1 Q_{1j}h \log \bigl(P_1 Q_{1j}h\bigr) \; \gamma d\V \\
&=\int P_1 (\frac{Q_{1j}h}{P_1P_jh}) \log \bigl( P_1 (\frac{Q_{1j}h}{P_1P_jh}) \bigr) P_1P_jh \; \gamma d\V + \int P_1Q_{1j}h \log (P_1P_jh) \; \gamma d\V
\end{align*}
where we use that $P_1P_jh$ does not depend on $v_1$. Since the argument of the logarithm in the last term is also independent of $v_j$, we can integrate $P_1Q_{1j}h$ with respect to those variables and use that $\int P_1 Q_{1j}h \;g(v_1)g(v_j)dv_1 dv_j = \int h \;g(v_1)g(v_j)dv_1 dv_j = P_1P_jh$ to write:

\[S(P_1Q_{1j}h) = \int P_1 (\frac{Q_{1j}h}{P_1P_jh}) \log \bigl( P_1 (\frac{Q_{1j}h}{P_1P_jh}) \bigr) P_1P_jh \; \gamma d\V + \int P_1P_jh \log (P_1P_jh) \; \gamma d\V\]

Now, we apply the symmetric version of Han's inequality \eqref{E:Hansym} to $\frac{Q_{1j}h}{P_1P_jh}$ as a function of $v_1$ and $v_j$ to get:

\begin{align*}
S(P_1 Q_{1j}h) &\leq \frac{1}{2} \int \frac{Q_{1j}h}{P_1P_jh} \log \bigl(\frac{Q_{1j}h}{P_1P_jh} \bigr) P_1P_jh \; \gamma d\V + \int P_1P_jh \log (P_1P_jh) \; \gamma d\V \\
&= \frac{1}{2} S(Q_{1j}h) - \frac{1}{2} \int Q_{1j}h \log (P_1P_jh) \; \gamma d\V + \int P_1P_jh \log(P_1 P_j h) \; \gamma d\V \\
&=\frac{1}{2}S(Q_{1j}h) + \frac{1}{2}S(P_1P_jh)
\end{align*}

where, to get to the last step, we have used that $Q_{1j}$ is self-adjoint and $P_1P_j$ is independent of $v_1$ and $v_j$.

Now, summing these terms, and noting that $S(Q_{1j}h) \leq S(h)$ by the averaging property of $Q_{1j}$, we get
\[
\sum_{j=2}^N S(P_1 Q_{1j}h) \leq \frac{N-1}{2}S(h) + \frac{1}{2}\sum_{j=2}^N S(P_jP_1h) \,.
\]

We invoke Han's inequality \eqref{E:Han} on $P_1 h \equiv (P_1h)(v_2,...v_N)$, ie. $\sum_{j=2}^N S(P_jP_1h) \leq (N-2)S(P_1h) \leq (N-2)S(h)$ to complete the proof.
\end{proof}

\begin{lem} 
\[S(e^{\mu(P_1-I)t}Qh) \leq \left(1-\frac{1-e^{-\mu t}}{N(N-1)}\right) S(h) \, .\]
\end{lem}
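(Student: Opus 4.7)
The plan is to exploit the idempotency of $P_1$ to compute the exponential $e^{\mu(P_1-I)t}$ in closed form, reducing the problem to a convex combination and then applying convexity of the relative entropy functional $S$ twice.

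Since $P_1$ is an orthogonal projection, $(P_1 - I)^2 = -(P_1 - I)$, and the power series expansion yields
\begin{equation*}
e^{\mu(P_1-I)t} = e^{-\mu t}\, I + (1 - e^{-\mu t})\, P_1 .
\end{equation*}
Thus $e^{\mu(P_1-I)t}Qh$ is a convex combination of $Qh$ and $P_1 Q h$, and by convexity of $S$ (which follows from convexity of $x\log x$),
\begin{equation*}
S\bigl(e^{\mu(P_1-I)t} Qh\bigr) \leq e^{-\mu t}\, S(Qh) + (1-e^{-\mu t})\, S(P_1 Qh).
\end{equation*}
Kac's averaging inequality gives $S(Qh) \leq S(h)$, so the task reduces to proving $S(P_1 Qh) \leq \bigl(1 - \tfrac{1}{N(N-1)}\bigr) S(h)$.

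To bound $S(P_1 Qh)$, expand $Qh = \binom{N}{2}^{-1}\sum_{i<j} Q_{ij}h$ and apply convexity once more,
\begin{equation*}
S(P_1 Qh) \leq \frac{1}{\binom{N}{2}} \sum_{i<j} S(P_1 Q_{ij}h),
\end{equation*}
then split the sum according to whether the index $1$ belongs to $\{i,j\}$. Lemma \ref{L:half} handles the $N-1$ pairs of the form $(1,j)$ as a block, yielding $\sum_{j=2}^N S(P_1 Q_{1j}h) \leq (N - \tfrac{3}{2})\, S(h)$. For each of the remaining $\binom{N-1}{2}$ pairs with $1 \notin \{i,j\}$, the operators $P_1$ and $Q_{ij}$ commute, so
\begin{equation*}
S(P_1 Q_{ij} h) = S(Q_{ij} P_1 h) \leq S(P_1 h) \leq S(h),
\end{equation*}
using averaging for $Q_{ij}$ and then for $P_1$. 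Summing the two contributions,
\begin{equation*}
S(P_1 Qh) \leq \frac{(N - \tfrac{3}{2}) + \binom{N-1}{2}}{\binom{N}{2}}\, S(h) = \frac{N(N-1) - 1}{N(N-1)}\, S(h) = \Bigl(1 - \frac{1}{N(N-1)}\Bigr) S(h).
\end{equation*}

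Substituting this into the convex-combination estimate, the weighted average of $1$ and $1 - \tfrac{1}{N(N-1)}$ with weights $e^{-\mu t}$ and $1 - e^{-\mu t}$ is precisely $1 - \tfrac{1-e^{-\mu t}}{N(N-1)}$, giving the claimed bound. The only non-routine input is Lemma \ref{L:half}, which has already been proved; the rest is the idempotency identity for $P_1$ plus two applications of entropy convexity. I expect no genuine obstacle beyond this bookkeeping.
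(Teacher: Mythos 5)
Your proof is correct and follows essentially the same route as the paper: the idempotency identity $e^{\mu(P_1-I)t}=e^{-\mu t}I+(1-e^{-\mu t})P_1$, convexity of $S$, the split of $\sum_{i<j}S(P_1Q_{ij}h)$ according to whether $1\in\{i,j\}$, and Lemma \ref{L:half} for the pairs containing index $1$. The arithmetic checks out, so there is nothing to add.
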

\begin{proof}
\begin{align*}
S(e^{\mu(P_1-I)t}Qh) &= S(e^{-\mu t}Qh+(1-e^{-\mu t})P_1Qh) \;\;\;\;\; \text{  (since $P_1$ is a projection)} \\
&\leq e^{-\mu t} S(Qh) + (1-e^{-\mu t}) S(P_1Qh) \\
&\leq e^{-\mu t} S(h) + (1-e^{-\mu t}) \frac{1}{\binom{N}{2}} \sum_{i<j}S(P_1Q_{ij}h)\\
&= e^{-\mu t} S(h) + (1-e^{-\mu t}) \frac{1}{\binom{N}{2}} \bigl(\sum_{i<j, i,j \neq 1} S(P_1 Q_{ij}h) + \sum_{j=2}^N S(P_1 Q_{1j}h)\bigr) \\
&\leq e^{-\mu t} S(h) + (1-e^{-\mu t}) \frac{1}{\binom{N}{2}} \bigl(\sum_{i<j, i,j \neq 1} S(h) + (N-1 - \frac{1}{2})S(h)\bigr) \\
&= \left(1-\frac{1-e^{-\mu t}}{N(N-1)}\right) S(h),
\end{align*}

where we use Lemma \ref{L:half} in the last inequality. We use the convexity of the entropy and the averaging property of $P_1$ and $Q$ in the previous steps.
\end{proof}

\begin{lem} \label{L:2entr}
Let $1 \leq m < N$. Then
\begin{equation} \label{E:m}
S\left(\exp{\left(\mu \displaystyle\sum_{k=1}^m (P_k-I)t)\right)} Q h\right) \leq \left( 1-\frac{m(1-e^{-\mu t})}{N(N-1)} \right)S(h).
\end{equation}
\end{lem}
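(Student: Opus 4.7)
The plan is to exploit that the $P_k$'s are pairwise commuting projections, expand the exponential as a convex combination of products $P_S := \prod_{k \in S} P_k$, and reduce everything to a single auxiliary entropy bound for $S(P_S Q h)$.

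Since each $P_k$ is idempotent and the $P_k$'s pairwise commute (they act on distinct coordinates), writing $a := e^{-\mu t}$ and $b := 1 - e^{-\mu t}$ gives
\[e^{\mu \sum_{k=1}^m (P_k - I) t} = \prod_{k=1}^m (a I + b P_k) = \sum_{S \subseteq \{1,\dots,m\}} a^{m-|S|} b^{|S|}\, P_S,\]
and the coefficients sum to $(a+b)^m = 1$. By convexity of the relative entropy,
\[S\!\left(e^{\mu \sum_{k} (P_k-I)t} Q h\right) \leq \sum_{S \subseteq \{1,\dots,m\}} a^{m-|S|} b^{|S|}\, S(P_S Q h).\]

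The key step is the auxiliary bound
\[S(P_S Q h) \leq \Bigl(1 - \tfrac{|S|}{N(N-1)}\Bigr) S(h) \qquad \text{for every } S \subseteq \{1,\dots,N\} \text{ with } |S| < N,\]
which generalizes the previous lemma (the case $|S|=1$). To prove it I would write $S(P_S Q h) \leq \frac{1}{\binom{N}{2}} \sum_{i<j} S(P_S Q_{ij} h)$ and split the pairs according to $|\{i,j\} \cap S|$. Pairs with $\{i,j\} \cap S = \emptyset$ contribute at most $S(h)$ (as $Q_{ij}$ commutes with $P_S$ and averages), as do pairs with $\{i,j\} \subseteq S$ (on which rotation-invariance of the Gaussian gives $P_S Q_{ij} h = P_S h$). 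For a mixed pair with $i \in S$, $j \notin S$, I would rerun the one-pair estimate from the proof of Lemma~\ref{L:half} applied to $P_{S \setminus \{i\}} h$ to obtain $S(P_i Q_{ij} P_{S \setminus \{i\}} h) \leq \tfrac{1}{2} S(P_{S \setminus \{i\}} h) + \tfrac{1}{2} S(P_{S \cup \{j\}} h)$, then sum over $j \notin S$ using Han's inequality~\eqref{E:Han} applied to $P_S h$ in its $N - |S|$ free variables, and finally sum over $i \in S$. A direct count shows the three contributions combine to $\binom{N}{2} - |S|/2$, giving the bound.

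With the auxiliary bound in hand, the lemma follows by a binomial identity: $\sum_{S \subseteq \{1,\dots,m\}} a^{m-|S|} b^{|S|} |S| = mb$ (the mean of a $\mathrm{Bin}(m,b)$ variable), so
\[S\!\left(e^{\mu \sum_{k}(P_k-I)t} Q h\right) \leq \Bigl(1 - \tfrac{mb}{N(N-1)}\Bigr) S(h) = \Bigl(1 - \tfrac{m(1-e^{-\mu t})}{N(N-1)}\Bigr) S(h).\]
The main obstacle is the auxiliary estimate for general $S$: individually the three cases just reuse the tools from the $m = 1$ proof, but the bookkeeping needs care to ensure the $\tfrac{1}{2}$ gain is collected from every mixed pair and only from those pairs, and that Han's inequality is applied on the correct subset of free variables.
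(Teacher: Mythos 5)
Your proof is correct, but it takes a genuinely different route from the paper. The paper proves \eqref{E:m} by induction on $m$: it peels off $P_m$, splits $Q$ into the pairs avoiding $m$ and the pairs containing $m$, applies the induction hypothesis at $(N-1,m-1)$ to the first group and Lemma \ref{L:half} to the second, and then combines this with the induction hypothesis at $(N,m-1)$. You avoid induction entirely by expanding the commuting product $\prod_{k=1}^m\bigl(e^{-\mu t}I+(1-e^{-\mu t})P_k\bigr)$ over subsets $S$ and reducing to the single clean estimate $S(P_S Q h)\le \bigl(1-\tfrac{|S|}{N(N-1)}\bigr)S(h)$, after which the binomial mean identity finishes the proof. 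I checked your bookkeeping for the auxiliary estimate and it closes: the pairs with $\{i,j\}\cap S=\emptyset$ contribute $\binom{N-|S|}{2}S(h)$, the pairs with $\{i,j\}\subseteq S$ contribute $\binom{|S|}{2}S(h)$ (using $P_iP_jQ_{ij}=P_iP_j$ by rotation invariance of $g(v_i)g(v_j)$), and for each $i\in S$ the mixed pairs contribute $\bigl(N-|S|-\tfrac12\bigr)S(h)$ via the one-pair estimate $S(P_iQ_{ij}u)\le\tfrac12 S(u)+\tfrac12 S(P_iP_ju)$ applied to $u=P_{S\setminus\{i\}}h$ together with Han's inequality \eqref{E:Han} on the $N-|S|$ free variables of $P_Sh$; since $\binom{N-|S|}{2}+\binom{|S|}{2}+|S|(N-|S|)=\binom{N}{2}$, the total is $\binom{N}{2}-\tfrac{|S|}{2}$, as needed (and $|S|\le m<N$ keeps Han's inequality applicable). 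The ingredients are the same as the paper's --- convexity of $x\log x$, the one-pair estimate from Lemma \ref{L:half}, and Han's inequality --- but your subset expansion replaces the double induction with a transparent counting argument and yields the more general bound $S(P_SQh)\le\bigl(1-\tfrac{|S|}{N(N-1)}\bigr)S(h)$ as a byproduct; the paper's inductive formulation, on the other hand, is the one that meshes with how the $(N-1)$-particle Kac operator appears when a particle is integrated out, which is the pattern used elsewhere in the paper (e.g.\ in Theorem \ref{T:gap}).
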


\begin{proof}
We prove the above by induction on $m$. The base case $m=1$ (and any $N>1$) was shown in the previous Lemma. We restrict to $\{(N,m):2 \leq m < N\}$ for the rest of the proof. Assume that the Lemma is true for $m-1$ (and any $N > m-1)$. To infer from this its validity for the case $m$ (and any $N>m$), we analyze below the entropy of $P_m \exp{\left(\mu \sum_{k=1}^{m-1}(P_k-I)t\right)}$, where we expand the Kac operator $Q$, split it into terms that contain $m$ and those that do not, and utilize the convexity of the entropy. 

\begin{align*}
S\left(P_m \exp{\bigl(\mu \sum_{k=1}^{m-1}(P_k-I)t\bigr)}Qh\right) &\leq (1-\frac2N)S\left(\frac{\exp{\bigl(\mu \sum_{k=1}^{m-1}(P_k-I)t\bigr)}}{\binom{N-1}{2}} \displaystyle\sum_{\substack{i<j \\ i,j \neq m}} Q_{ij}P_m h \right) \\
&+\frac{2}{N}S\left(\frac{\exp{\bigl(\mu \sum_{k=1}^{m-1}(P_k-I)t}\bigr)}{N-1} P_m \displaystyle\sum_{l \neq m} Q_{lm}h \right) \, .
\end{align*}

In the first term\footnote{This term is non-zero only when $N>2$, which is the case here.}, we also use the commutativity of $P_m$ with $Q_{ij}$ when neither $i$ nor $j$ equal $m$. Next, we treat the terms as follows: 
\begin{itemize}
\item Term 1: We apply the induction hypothesis for $m-1$, $N-1$ since $P_m h$ is a function of $N-1$ variables and $\binom{N-1}{2}^{-1} \displaystyle\sum_{\substack{i<j \\ i,j \neq m}} Q_{ij}$ is the Kac operator acting on $N-1$ variables.
\item Term 2: We  use the averaging property of $\exp{\bigl(\mu \sum_{k=1}^{m-1}(P_k-I)t}\bigr)$, convexity, and Lemma \ref{L:half}. 
\end{itemize}

We obtain 
\begin{align} \label{E:now}
S(P_m \exp{\bigl(\mu \sum_{k=1}^{m-1}(P_k-I)t\bigr)}Qh) &\leq (1-\frac2N)\left(1-(m-1)\frac{1-e^{-\mu t}}{(N-1)(N-2)}\right)S(h)  + \frac2N \frac{1}{N-1} (N- \frac32) S(h) \, .
\end{align}

Now starting with the left-hand side of \eqref{E:m} and using convexity plus the fact that $P_m$ is a projection, write
\begin{align*}
S\left(\exp{(\mu \displaystyle\sum_{k=1}^m (P_k-I)t)} Q h \right) &= S\left((e^{-\mu t}I+(1-e^{-\mu t})P_m)\exp{\bigl(\mu \sum_{k=1}^{m-1}(P_k-I)t\bigr)}Qh \right)\\
&\leq e^{-\mu t} S\left(\exp{\bigl(\mu \sum_{k=1}^{m-1}(P_k-I)t\bigr)}Qh \right) \\ &+ (1-e^{-\mu t}) S\left(P_m \exp{\bigl(\mu \sum_{k=1}^{m-1}(P_k-I)t\bigr)}Qh \right)
\end{align*}

Using the induction hypothesis for the case $m-1$, $N$ for the first term, and the bound \eqref{E:now} for the second term, the Lemma follows through some algebraic simplification.

\end{proof}

In the following, denote $A(t):=1-\frac{m(1-e^{-\mu t})}{N(N-1)}$. 

\begin{proof}[Proof of Theorem~\ref{T:duh}]

Expanding $e^{-\el_{N,m}t}$ using the Dyson series with $Q$ as the perturbation:
\begin{align*}
e^{N\lambda (Q-I)t+\mu \sum_k(P_k-I)t}&=e^{-N\lambda t} e^{N\lambda Qt+\mu \sum(P_k-I)t} \\
&= e^{-N\lambda t} \{e^{\mu \sum(P_k-I)t}+\int_0^t dt_1 e^{\mu \sum(P_k-I)(t-t_1)} \; N \lambda Q \; e^{\mu \sum(P_k-I)t_1} \\
&+ \int_0^t dt_1 \int_0^{t_1} dt_2\; e^{\mu \sum(P_k-I)(t-t_1)} \; N\lambda Q \; e^{\mu \sum(P_k-I)(t_1 - t_2)} \;N\lambda Q \;e^{\mu \sum(P_k-I)t_2} + ...\}
\end{align*}

Therefore, using the convexity of entropy, and Lemma~\ref{L:2entr},
\[S(h(.,t)) \leq e^{-N\lambda t} \left(1+N\lambda \int_0^t dt_1 A(t-t_1) +(N\lambda)^2 \int_0^t dt_1 \int_0^{t_1} dt_2 A(t-t_1)A(t_1-t_2) + ...\right) S(h(.,0)) \]
 \[= e^{-N\lambda t} \left(1+N\lambda (A*1) + (N\lambda)^2 (A*A*1) + ...\right) S(h(.,0)) \]
where $*$ is the Laplace-convolution operation.
Thus we have that
\begin{equation} \label{eq2}
 S(h(.,t)) \leq e^{-N \lambda t}\; \varphi(t) S(h(.,0))
\end{equation}

where $\varphi$ is defined through the series above. We compute $\varphi(t)$ using its Laplace transform $\tilde{\varphi}(s)$.
Then:
\[\tilde{\varphi}(s)= \frac{1}{s} \; \sum_{k=0}^{\infty}{(N \lambda \; \tilde{A}(s))^k} \]
where $\tilde{A}(s)=\frac{1}{s}-\frac{m}{N(N-1)}(\frac{1}{s}-\frac{1}{s+\mu})$ is the Laplace transform of $A(t)$.

Summing the geometric series (the sum converges if we assume, for instance, that $\tilde{\varphi}(s)$ is defined on the domain $s>N\lambda$), 

\[\tilde{\varphi}(s)=\frac{s+\mu}{s^2+(\mu-N\lambda)s-N\mu \lambda (1-\frac{m}{N(N-1)})}\]

The inverse Laplace transform of the above is
\[-\frac{\delta_- e^{(N\lambda - \delta_+)t}}{\sqrt{(N\lambda+\mu)^2-4m\lambda \mu /(N-1)}} + \frac{\delta_+ e^{(N\lambda-\delta_-)t}}{\sqrt{(N\lambda+\mu)^2-4m\lambda \mu /(N-1)}} \]

Now we invoke the uniqueness of the Inverse Laplace Transform: No two piecewise continuous, locally bounded functions of exponential order can have the same Laplace transform (see e.g. \cite{Churchill}). Since $\varphi(t)$ (see eq.~\eqref{eq2}) belongs to this space, we get

\[\varphi(t)=  -\frac{\delta_- e^{(N\lambda - \delta_+)t}}{\sqrt{(N\lambda+\mu)^2-4m\lambda \mu /(N-1)}} + \frac{\delta_+ e^{(N\lambda-\delta_-)t}}{\sqrt{(N\lambda+\mu)^2-4m\lambda \mu /(N-1)}}\]

Plugging this into \eqref{eq2}, we obtain the desired result \eqref{eq:main}.

\end{proof}

\begin{xrems} \
\begin{itemize}
\item From~\eqref{E:entr0}, one notices that $\frac{dD}{dt}|_{t=0}=0$. This implies, in particular, that Theorem \ref{T:duh} does not give us a bound like \eqref{E:prod} on the entropy production. This results from the fact that the significant bounds used in the proof, from Lemma \ref{L:half}, required the presence of the second-order term $\sum_k (P_k-I)Q$. Note that $\frac{d^2D}{dt^2}|_{t=0}<0$.
\item The main bound (Lemma \ref{L:half}) was obtained by estimating terms of the form $S(P_1 Q_{1j}h)$, and we ignored any possible contribution from many other terms e.g. $S(Q_{ij}Q_{kl}h)$. Thus, there may be scope for a better bound.
\item In particular, we hope to obtain an entropy decay rate that scales as $\frac{m}{N}$ (as we had for the spectral gap). We were able to obtain a decay rate scaling as $\frac{1}{N}$ for a modified model: a system of $N$ particles where one of them is thermostated (through a Maxwellian thermostat) and the Kac collision interaction is replaced by the (much stronger) projection onto radial functions. Thus, the role of the Kac interaction in the equilibration process needs to be better understood.
\end{itemize}
\end{xrems}

Finally, we demonstrate why it is not easy to find an entropy production bound in our problem. Consider the case $N=2$, $m=1$ with $\lambda=\frac12$, $\mu=1$. Here, one could write \[\frac{dS(h)}{dt} = \int{P_1h \log h \gamma d\V} + \int{Qh \log h \gamma d\V} -2S(h)\] \[ \leq \int{P_1h \log P_1h \gamma d\V} + \int{Qh \log Qh \gamma d\V} -2S(h) \; .\] 

We use in the last step that $P_1$, $Q$ are projections and $\log x$ is concave. Bounding this from above by $-k S(h)$ (for some $k>0$) would be sufficient to obtain an entropy production bound. This idea has worked, e.g., for a sum of mutually orthogonal projections like strong thermostats acting on different particles. However, in our case, we can find, for every $\epsilon>0$, a density $h_{\epsilon}$ such that
\[\frac{\int{P_1h_{\epsilon} \log P_1h_{\epsilon} \gamma d\V} + \int{Qh_{\epsilon} \log Qh_{\epsilon} \gamma d\V}}{S(h_{\epsilon})} \geq 2-\epsilon \]

The idea is to take $h$ proportional to the characteristic function of the set $[-a,a] \times [R-a,R+a]$. As $R \to \infty$, the ratio above asymptotically approaches the value $2$. The intuition behind this construction is that as $R \to \infty$, $h$ is supported approximately in the intersection of the supports of $P_1h$ (a ``band'' of width $2a$ parallel to the $v_1$ axis) and $Qh$ (an annulus around the origin). It is the tangential nature of this intersection that precludes the application of Han's inequality \cite{Han} to improve the bound $S(P_1h)+S(Qh) \leq 2S(h)$. We are not, however, ruling out the possibility of using a different method to obtain an entropy production bound.

\section{Conclusion}

Our results imply that if a macroscopic fraction of particles is thermostated, the kinetic energy and the $L^2$ norm decay exponentially to their respective equilibrium values at a rate \emph{independent} of $N$.  However, our entropy bound \eqref{eq:main} yields a decay rate that vanishes as $\frac{1}{N}$ in the thermodynamic limit. Hence, at least under a suitable class of initial conditions, we think it should be possible to improve \eqref{eq:main} to reflect the physical situation.

The question of entropy production at $t=0$ (and any $N$) remains unsettled. The bound \eqref{eq:main} does not preclude the possibility of zero entropy production at time $0$. However, we do not know if it actually occurs in the model for some initial conditions.

One could wonder how the notion of propagation of chaos (which was the main motivation behind the formulation of the Kac model) adapts to our situation. When $m$ is finite, the coupling to the heat bath becomes insignificant in the thermodynamic limit. On the other hand, when $m=\alpha N$ for some $\alpha <1$, preliminary calculations indicate that in the limit, a coupled Boltzmann equation system should result. The \emph{Stosszahlansatz} needs to be reformulated in a precise manner to account for different distributions of the thermostated and the non-thermostated particles. Moreover, generalizations of our model could bring about connections to previously studied thermostated Boltzmann equations \cite{CLM}.

Lastly, the results in Section \ref{S:vh} suggest that it should be possible to extend our analysis to the case of systems partially coupled to the \emph{weak} thermostat.

\section*{Acknowledgement}
We are grateful to our advisors Federico Bonetto and Michael Loss for 
suggesting this topic, and for very fruitful discussions.
We also thank them for their help in Theorem \ref{T:gap} and Lemma \ref{L:half}. 

\appendix

\section{Appendix: Spectrum of Evolution Operator for $N=2,m=1$} \label{A:sp2}
We analyze the spectrum of the self-adjoint evolution operator $\el_{2,1}=2\lambda (I-Q)+\mu(I-P_1)$, in the space $L^2(\mathbb{R}^2,\gamma(\V)d\V)$, and deduce its spectral gap stated in \eqref{E:gap2}. For simplicity, we denote the operators $\el_{2,1}$ and $P_1$ by $\el$ and $P$.

Notice that $\el$ is a linear combination of two projections ($Q \equiv Q_{12}$ is an orthogonal projection onto radial functions in $\mathbb{R}^2$). The condition $\<h,1>=0$ corresponding to the normalization of $f = \gamma (1+h)$, the leads us to work in the space of Hermite polynomials $\{H_{\alpha}(v)\}_{\alpha=0}^{\infty}$ with weight $g(v)$. The space  of interest $\mathcal{X}_2$ is spanned by $\{K_{i,j}: i,j \in \mathbb{N}, (i,j) \neq (0,0) \}$, where $K_{i,j}:=H_i(v_1)H_j(v_2)$. Without loss of generality, we work with monic Hermite polynomials.

The action of $P$ is as follows:

\[ P K_{i,j} = \left\{
     \begin{array}{lr}
       0 & : i \neq 0\\
       K_{0,j} & : i=0
     \end{array}
   \right.\]

Since each term in $K_{i,j}$ is odd in either $v_1$ or $v_2$ when either $i$ or $j$ is odd, we have that $Q K_{i,j} =0$ when either $i$ or $j$ is odd. We deduce the action of $Q$ on $K_{2\alpha_1,2\alpha_2}$ from its action on $v_1^{2\alpha_1}v_2^{2\alpha_2}$ using the following Lemma from \cite{BLV}, which applies to $Q$ as it is a projection onto radial functions.

\begin{lem}\cite{BLV}
Let $A$ be a self-adjoint operator on $L^2(\mathbb{R}^N,\gamma(\V)d\V)$ that preserves 
the space $P_{2l}$ of homogeneous even polynomials in $v_1,...,v_N$ of degree $2l$. 
If
\[
A(v_1^{2\alpha_1}...v_N^{2\alpha_N})=\sum_{\sum \alpha_i = \sum \beta_i} c_{ 
\beta_1...\beta_N}v_1^{2\beta_1}...v_N^{2\beta_N} \ ,
\] 
we get
\[
A(H_{2\alpha_1}(v_1)...H_{2\alpha_N}(v_N))=\sum_{\sum \alpha_i = \sum \beta_i}
{c_{\beta_1...\beta_N}H_{2\beta_1}(v_1)...H_{2\beta_N}(v_N)} \ .
\]
\end{lem}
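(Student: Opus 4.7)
The plan is to exploit self-adjointness of $A$ together with the orthogonality of products of Hermite polynomials in $L^2(\mathbb{R}^N,\gamma)$. For a multi-index $\alpha=(\alpha_1,\ldots,\alpha_N)$ with $|\alpha|:=\sum_i\alpha_i$, write $K_\alpha := v_1^{2\alpha_1}\cdots v_N^{2\alpha_N}$ and $\mathcal{H}_\alpha := H_{2\alpha_1}(v_1)\cdots H_{2\alpha_N}(v_N)$. Set $W_l := \mathrm{span}\{K_\alpha:|\alpha|=l\}$, $V_l := \mathrm{span}\{\mathcal{H}_\alpha:|\alpha|=l\}$, and $E_{\leq l} := \bigoplus_{k\leq l} W_k$. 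Since the monic $H_{2n}(v)$ has leading term $v^{2n}$, one has $\mathcal{H}_\alpha = K_\alpha + R_\alpha$ with $R_\alpha \in E_{\leq|\alpha|-1}$; hence $\{\mathcal{H}_\alpha:|\alpha|\leq l\}$ is also a basis of $E_{\leq l}$, and $E_{\leq l} = V_l \oplus E_{\leq l-1}$ is an \emph{orthogonal} decomposition under the Gaussian inner product.

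The first step is to observe that the hypothesis immediately yields $A(E_{\leq l})\subseteq E_{\leq l}$ for every $l$, since $AK_\alpha \in W_{|\alpha|}$. The crucial step is then to upgrade this to the orthogonal statement $A(V_l)\subseteq V_l$. This is where self-adjointness enters: for $h\in V_l$ one has $Ah\in E_{\leq l}$, and for any $w\in E_{\leq l-1}$,
\[\langle Ah,w\rangle = \langle h,Aw\rangle = 0,\]
because $Aw\in E_{\leq l-1}$ while $V_l\perp E_{\leq l-1}$. Combined with $Ah\in E_{\leq l}$, this forces $Ah\in V_l$.

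To identify the matrix coefficients, apply $A$ to $\mathcal{H}_\alpha = K_\alpha + R_\alpha$, use the hypothesis on $AK_\alpha$, and then substitute $K_\beta = \mathcal{H}_\beta + S_\beta$ with $S_\beta\in E_{\leq l-1}$ to obtain
\[A\mathcal{H}_\alpha \;=\; \sum_{|\beta|=l} c_{\beta_1\ldots\beta_N}\,\mathcal{H}_\beta \;+\; \zeta, \qquad \zeta \in E_{\leq l-1}.\]
But $A\mathcal{H}_\alpha\in V_l$ by the previous step, and the first sum on the right already lies in $V_l$; since $V_l\perp E_{\leq l-1}$, orthogonal projection onto $V_l$ kills $\zeta$ and delivers the desired identity.

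The main obstacle is the promotion $A(E_{\leq l})\subseteq E_{\leq l} \Longrightarrow A(V_l)\subseteq V_l$. Mere preservation of the filtration is not enough: without self-adjointness, $A$ could send a pure Hermite-shell element $\mathcal{H}_\alpha$ into a combination that picks up contributions from lower Hermite shells $V_{l-1},V_{l-2},\ldots$, and the Hermite-basis matrix of $A$ would then genuinely differ from its monomial-basis matrix. Self-adjointness, combined with the fact that the $V_l$ are precisely the orthogonal complements in the filtration, is exactly what forces the two matrix representations to coincide.
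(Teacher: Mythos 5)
Your proof is correct and complete: the triangularity of the monic-Hermite change of basis gives $E_{\leq l}=V_l\oplus E_{\leq l-1}$ orthogonally, the hypothesis gives preservation of the filtration, and self-adjointness is exactly what upgrades this to $A(V_l)\subseteq V_l$ and kills the lower-order remainder $\zeta$. The paper itself supplies no proof of this lemma (it is quoted from \cite{BLV}), but your argument is the natural one and there is nothing to flag beyond noting that the orthogonality $V_l\perp E_{\leq l-1}$ rests on the one-dimensional orthogonality of the $H_j$ under $g(v)\,dv$ and tensorization, which you use correctly.
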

               
Let $n:=\alpha_1+\alpha_2$ and $\Gamma_{\alpha_1,\alpha_2}:=\dashint_0^{2\pi}{\cos^{2\alpha_1}{\theta} \sin^{2\alpha_2}{\theta}} d\theta=\frac{(2\alpha_1-1) !! (2\alpha_2-1)!!}{2^{\alpha_1+\alpha_2}(\alpha_1+\alpha_2)!}$, with the standard definition $(-1)!!=1$. Then we have

\[Q K_{i,j}= \left\{ \begin{array}{ll}
                0 & : i \text{ or } j \text{ odd} \\
                \Gamma_{\alpha_1,\alpha_2} \sum_{m=0}^n \binom{n}{m} K_{2m,2n-2m} & :i=2\alpha_1, j=2\alpha_2 
                \end{array} \right.\]

Now a case-by-case analysis, using the fact that $L_{2n}:=\text{Span}\{H_{2\alpha_1}(v_1)H_{2\alpha_2}(v_2):\alpha_1+\alpha_2=n\}$ are invariant subspaces for $\el$, yields the following for the spectrum of $\el$:

\begin{center}
\begin{tabular}{|c|c|} 
\hline
Eigenvalue & Eigenfunction \\
\hline
$2\lambda + \mu$ & $K_{i,j}$, $i$ or $j$ odd, $i \neq 0$ \\
                 & $\sum_{i=1}^n c_i K_{2i,2n-2i}$ where $\sum_{i=1}^n c_i \Gamma_{i,n-i}=0$\\
                 
\hline
$2\lambda$ & $K_{0,j}$, $j$ odd \\
\hline
$x^{\pm,n}$ & $\sum_{i=0}^n c^{\pm,n}_i K_{2i,2n-2i}$ and eq.~\eqref{E:cond} \\
\hline
\end{tabular} 
\end{center}

\begin{rem}
The first row corresponds to functions that belong to the kernels of both $Q$ and $P$, and the second row to functions that belong to the kernels of $Q$ and $I-P$.
\end{rem}

Here, \[x^{\pm,n}=\frac{(2\lambda+\mu) \pm \sqrt{(2\lambda+\mu)^2-8\lambda \mu(1-\Gamma_{0,n})}}{2}\]

and

\begin{equation} \label{E:cond}
c_0^{\pm,n}= \frac{2\lambda}{2\lambda - x^{\pm,n}} \text{ and } c_i^{\pm,n}= \frac{2\lambda \binom{n}{i}}{x^{\mp,n}}\text{ for } i \neq 0
\end{equation}

Using the fact that $\Gamma_{0,n}=\frac{1}{2\pi}\int_0^{2\pi}{\cos^{2n}{\theta} d\theta}$ is decreasing in $n$, it is easy to see that the smallest eigenvalue is $x^{-,1}$. The corresponding eigenfunction is $\frac{2\lambda}{2\lambda-x^{-,1}}K_{0,2} + \frac{2\lambda}{x^{+,1}}K_{2,0}$.

\section{Appendix: Entropy Bound Optimizer for the Weak Thermostat} \label{A:appopt}
 
In \cite{BLV}, the convexity of entropy is employed to show that if $f(v,t)$ satisfies

\[\frac{\partial f}{\partial t} = \eta(Uf - f)\]

where $U$ is the weak thermostat, then \eqref{E:kappa} holds true.

We remark here that if $\phi_\delta (v):=(1-\delta)M_x(v)+\delta M_y(v)$,  where $x=\frac{1}{\beta(1-\delta)}$, $y=\frac{1}{\beta\delta}$ and $M_a(v)=\frac{1}{\sqrt{2\pi a}}e^{-v^2/2a}$, then 

\[\lim_{\delta \to 0} \frac{1}{S(\phi_\delta)}\frac{dS}{dt}(\phi_\delta) \geq -\frac{\eta}{2}\]

thereby showing that \eqref{E:kappa} is an optimal bound. $\phi_\delta$ is a convex combination of Maxwellians, one of which approaches the distribution of the heat bath $M_{\frac{1}{\beta}}$ and the other corresponds to a very high energy distribution (albeit with a vanishing weight) as $\delta \to 0$. These types of functions have been used in \cite{bobcerc,CCLRV,amit} as examples of distributions that are away from equilibrium (in the sense of the entropy) and yet have vanishingly low entropy production (in magnitude). 

\bibliographystyle{abbrv}
\bibliography{nonequi}

\begin{thebibliography}{10}

\bibitem{bobcerc}
A.~Bobylev and C.~Cercignani.
\newblock On the rate of entropy production for the {B}oltzmann equation.
\newblock {\em J. Stat. Phys.}, 94(3-4):603--618, 1999.

\bibitem{BLV}
F.~Bonetto, M.~Loss, and R.~Vaidyanathan.
\newblock The {K}ac model coupled to a thermostat.
\newblock {\em J. Stat. Phys.}, 156(4):647--667, 2014.

\bibitem{CCL}
E.~A. Carlen, M.~C. Carvalho, and M.~Loss.
\newblock Many-body aspects of approach to equilibrium.
\newblock In {\em Journ\'ees ``\'{E}quations aux {D}\'eriv\'ees {P}artielles''
  ({L}a {C}hapelle sur {E}rdre, 2000)}, pages Exp.\ No.\ XI, 12. Univ. Nantes,
  Nantes, 2000.

\bibitem{CCLRV}
E.~A. Carlen, M.~C. Carvalho, M.~Loss, J.~L. Roux, and C.~Villani.
\newblock Entropy and chaos in the {K}ac model.
\newblock {\em Kinet. Relat. Models}, 3:85--122, 2010.

\bibitem{CLM}
E.~A. Carlen, J.~L. Lebowitz, and C.~Mouhot.
\newblock Exponential approach to, and properties of, a non-equilibrium steady
  state in a dilute gas.
\newblock {\em arXiv:1406.4097}.

\bibitem{Cerc}
C.~{Cercignani}.
\newblock {H-theorem and trend to equilibrium in the kinetic theory of gases}.
\newblock {\em Archiv of Mechanics, Archiwum Mechaniki Stosowanej},
  34:231--241, 1982.

\bibitem{Churchill}
R.~V. Churchill.
\newblock New York: McGraw Hill, 1963.

\bibitem{DA}
G.~Dell'Antonio.
\newblock {\em The van {H}ove limit in classical and quantum mechanics}, pages
  75--110.
\newblock Springer Berlin / Heidelberg, 1982.

\bibitem{amit}
A.~Einav.
\newblock On {V}illani's conjecture concerning entropy production for the {K}ac
  master equation.
\newblock {\em Kinet. Relat. Models}, 4(2):479--497, 2011.

\bibitem{Han}
T.~S. Han.
\newblock Nonnegative entropy measures of multivariate symmetric correlations.
\newblock {\em Information and Control}, 36(2):133--156, 1978.

\bibitem{Janvresse}
E.~Janvresse.
\newblock Spectral gap for {K}ac's model of {B}oltzmann equation.
\newblock {\em Ann. Probab.}, 29(1):288--304, 2001.

\bibitem{Kac}
M.~Kac.
\newblock Foundations of kinetic theory.
\newblock In {\em Proceedings of the {T}hird {B}erkeley {S}ymposium on
  {M}athematical {S}tatistics and {P}robability, 1954--1955, vol. {III}}, pages
  171--197, Berkeley and Los Angeles, 1956. University of California Press.

\bibitem{LW}
L.~H. Loomis and H.~Whitney.
\newblock An inequality related to the isoperimetric inequality.
\newblock {\em Bull. Amer. Math. Soc}, 55:961--962, 1949.

\bibitem{Maslen}
D.~K. Maslen.
\newblock The eigenvalues of {K}ac's master equation.
\newblock {\em Math. Z.}, 243(2):291--331, 2003.

\bibitem{McK}
H.~P. McKean, Jr.
\newblock Speed of approach to equilibrium for {K}ac's caricature of a
  {M}axwellian gas.
\newblock {\em Arch. Rational Mech. Anal.}, 21:343--367, 1966.

\bibitem{Villani}
C.~Villani.
\newblock Cercignani's conjecture is sometimes true and always almost true.
\newblock {\em Comm. Math. Phys.}, 234(3):455--490, 2003.

\end{thebibliography}

\end{document}